\newtheorem{theorem}{Theorem}
\newtheorem{lemma}{Lemma} 
\newtheorem{prop}{Proposition}
\newcommand{\HA}{\mathcal{H}_{A}}
\newcommand{\HV}{\mathcal{H}_{V}}
\newcommand{\HVI}{\mathcal{H}_{V}^{\mathrm{inv}}}
\newcommand{\HAI}{\mathcal{H}_{A}^{\mathrm{inv}}}
\newcommand{\Deg}{\mathrm{deg}}
\DeclareRobustCommand{\Erase}{\bgroup\markoverwith{\textcolor{red}{\rule[.5ex]{2pt}{0.4pt}}}\ULon}
\begin{document}
%\newpage
%\clearpage
% \subfile{frontcover}
\title{{\bf Pulsation of quantum walk on Johnson graph}
\vspace{0mm}} 

\author{
  Taisuke Hosaka$^{*}$, Etsuo Segawa$^{\dagger}$
  \\
  \small $^{*,\dagger}$Graduate School of Environment and Information Sciences \\
  \small Yokohama National University \\
  \small Hodogaya, Yokohama, 240-8501, Japan \\
  %e-mail: hosaka-taisuke-pn@ynu.jp \\
  % \\ 
  %  \\
  % \small Graduate School of Environment and Information Sciences \\
  % \small Yokohama National University \\
  % \small Hodogaya, Yokohama, 240-8501, Japan \\
  %e-mail: segawa-etsuo-tb@ynu.ac.jp \\
}

\date{\empty }

\maketitle
\thispagestyle{empty}

% \vspace{80mm}
% \noindent
% \begin{small}
% {\bf Corresponding author$^{\ast}$}: Taisuke Hosaka, 
% College of Engineering Science, 
% Yokohama National University, 
% Hodogaya, Yokohama, 240-8501, 
% JAPAN, e-mail: hosaka-taisuke-pn@ynu.jp,
% Tel.: +81-45-339-4205, Fax: +81-45-339-4205
% \end{small}

%\clearpage

%\vspace{20mm}

\par\noindent

%\clearpage

\begin{abstract}
  We propose a phenomenon of discrete-time quantum walks on graphs called the pulsation, which is a generalization of a phenomenon in the quantum searches.
  This phenomenon is discussed on a composite graph formed by two connected graphs $G_{1}$ and $G_{2}$.
  % \red{This phenomenon is discussed on a composite graph consisting of two connected graphs, $G_{1}$ and $G_{2}$.}
  The pulsation means that the state periodically transfers between $G_{1}$ and $G_{2}$ 
  with the initial state of the uniform superposition on $G_1$.
  % \sout{when all initial state is on $G_{1}$ and $G_{2}$ is empty.}}
  %the uniform superposition
  In this paper, we focus on the case for the Grover walk where $G_{1}$ is the Johnson graph and $G_{2}$ is a star graph.
  Also, the composite graph is constructed by identifying an arbitrary vertex of the Johnson graph with the internal vertex of the star graph.
  In that case,
  we find  the pulsation with  $O(\sqrt{N^{1+1/k}})$ periodicity, 
  % \sout{periodically occurs with the asymptotic probability 1 after taking $O(\sqrt{N^{1+1/k}})$ steps}},
  where $N$ is the number of vertices of the Johnson graph.
The proof is based on Kato's perturbation theory in finite-dimensional vector spaces.

  \mbox{} \\
    \noindent
    {\bf Keywords}: Quantum walk, Johnson graph, Pulsation, Perturbation theory.
\end{abstract}

%\clearpage

\section{Introduction}
\label{sec_intro}
Quantum walks (QWs) are considered as the quantum counterpart of the classical random walks (RWs).
QWs have an unique features that are not presented in RWs such as localization \cite{IK05, Ki22, M15}, periodicity \cite{HKSS17,KSTY18} and ballistic spreading \cite{K02,K05,KLS13}.
Because of its characteristics, QWs is widely used for various field.
One of the tasks to study QWs is the state transfer between two vertices of a graph \cite{CGGV15,G12, GGKL20}.
% The state transfer between two vertices of a graph is one of the tasks to study.
The aim is to transfer the state to a given vertex with high probability after certain steps.
The phenomenon has important application for quantum communication and search algorithms.
Also, this problem has been investigated about discrete-time QWs \cite{KS22,SS23}.
Quantum search algorithms are one of the application of QWs.
Quantum search algorithms have been studied that QWs provide quadratic faster than corresponding classical search algorithms via RWs on two-dimentional torus \cite{AKR05}, hypercube \cite{SKW03}, the Johnson graph \cite{TSP22D} and other several graphs \cite{ADZ93,AAKV01,A07,AGFK20,S04}.
This paper represents that we propose a new property of quantum walks called ``pulsation'' on graphs.
Let us explain this phenomenon precisely as follows:
For two connected graphs $G_{1}, G_{2}$, we consider a composite graph formed by two connected graph $G_{1}$ and $G_{2}$.
In addition, we assume that all initial states are on $G_{1}$.
When the states are evolved after taking a certain steps, almost all states transfer from $G_{1}$ to $G_{2}$, that is, the asymptotic probability of finding $G_{2}$ is $1$.
% Then, there exists an optimal time step $\tau$ such that almost all states transfer to $G_{2}$.
After another certain steps, the states return to $G_{1}$.
Then the pulsation is the property that the above phenomenon, which a QWer goes back and forth between $G_{1}$ and $G_{2}$, repeatedly occurs.
In other word, the asymptotic finding probabilities of $G_{1}$ and $G_{2}$ switches  $0$ to $1$ with a certain periodicity.
This phenomenon can be considered as a kind of generalization of spatial quantum search algorithm \cite{CG04, R18}, because the target vertex can be regarded as the isolated vertex $G_2$ embedded in $G_1$.
The goal of the spatial search algorithm is to collect the almost all states to the marked vertices quickly from the uniform initial state on a graph.
Then we also set that the initial state is uniform on $G_{1}$, and assume that $G_{2}$ is smaller than $G_{1}$ so that $G_{2}$ can be regarded as the {\it target} like marked vertices.
In this settings, from the view point of quantum search algorithms, the aim is to transfer from $G_{1}$ to a {\it geometric perturbation} $G_{2}$ quickly by using the pulsation which is a property of quantum walks.
%since almost all state transfer to the marked vertex in spatial search,  whereas to a small graph $G_{2}$ in pulsation.
% Our previous work \cite{HPS24} was considered on the complete graph and showed the existence of the pulsation.

The previous work \cite{HPS24} showed the existence of the pulsation on the case where $G_{1}$ is the complete graph with $n$ vertices, denoted by $K_{n}$, and $G_{2}$ is the star graph with $m$ leaves, denoted by $S_{m}$.
To reveal the mathematical structure of the pulsation in more general situation, fixing $S_{m}$ as $G_{2}$, we treat the Johnson graph $J(n,k) \; (n \in \mathbb{N}, \; 0 \leq k \leq n-1)$ as $G_{1}$, which is 
a family of an infinite series of graphs including the complete graphs. %, where $J(n,1) \backsimeq K_{n}$.
% Let $p_{s}(t)$ be the probability finding the star graph after taking $t$ steps.
% The goal is to detect whether the pulsation occurs in this settings and if occurs, to estimate the optimal time, which maximizes the finding probability of the star graph.
% We rigorously obtain the following main results whose consistency is supported by the numerical simulation in Figure 1.
% The Johnson graph is a kind of the distance-regular graph and has many properties \cite{BCN89,BH12,GR01}, which helpes us rigorously analyze.
% The Johnson graph has a property called distance-transitive, which helpes us rigorously analyze.
% In particular, the Johnson graph can be described as a path with self-loop, which helps us analyze.
% Also, the Johnson graph includes many well-known graph.
% For example, $J(n,1)$ and $J(n,2)$ are $K_{n}$ and the line graph of $K_{n}$, respectively.
% Because of these reasons, QWs on the Johnson graph has widely investigated \cite{TSP22D,XRL19, Y19}.
% The previous work \cite{HPS24} showed the existence of the pulsation in the case where $G_{1}=K_{n}$ and $G_{2}=S_{m}$, this is one of the reasons we focus on the Johnson graph.
% ジョンソングラフは距離正則グラフの一種であり、様々な性質を持ち合わせている。
% 特に、ループ付きのパスに置き換えることが可能であるため、厳密な解析の役に立っている。
% ジョンソングラフは様々なグラフを内包しており、$k=1$では完全グラフ、(例を挙げる。)
% また、ある頂点とスターグラフの中心の頂点を同一視したグラフにおいても、対称性が良いため、解析が可能になることを最大限に「利用している。
% 実際に、(先行研究)。
% 結果としては、確率1で拍動現象が発生し、時刻は次のようになることが厳密に得ることができた。
We rigorously obtain that for a fixed $k$ and $n \gg k$, the pulsation occurs, that is, the states back and forth between $J(n,k)$ and $S_{m}$ with the asymptotic probability 1 by taking $O(\sqrt{N^{1+1/k}})$ steps, where $N$ is the number of vertices of the Johnson graph.
In other words, $S_{m}$ absorbs almost all state of QW from the uniform state on $J(n,k)$.
In particular, this result tells us that almost all the energy of QW can be stored in only one edge for the setting of $m=1$ at a certain time.

The paper \cite{TSP22D} deals with detection of the target vertex in the Johnson graph. On the other hand, our paper gives a geometric perturbation (i.e. $S_{m}$) to the Johnson graph from the outside and regard that as the target.
% Both this paper and \cite{TSP22D} deal with similar models of the discrete-time QW to detect the {\it target} on the Johnson graph.
% However, our paper gives a geometric perturbation (i.e. $S_{m}$) for the Johnson graph as the {\it target}, which is the difference from \cite{TSP22D}.
% This result is slightly different from spatial search algorithm.
These difference affect the asymptotic probability and the optimal time: 
the paper \cite{TSP22D} showed that the asymptotic probability is $1/2$ by taking $O(\sqrt{N})$ steps, while 
our results imply that the target $S_m$ can be found with the high probability which is almost $1$ in the property of the pulsation, instead, the speed is slightly slow down, that is, $O(\sqrt{N^{1+1/k}})$.
% the pulsation is found $S_{m}$ with high probability but slightly slow speed compared to the spatial search algorithms in these settings.

The rest of this paper is organized as follows:
We set the notation and our quantum walk model in Section \ref{sec_setting}.
Section \ref{sec_result} gives the main results.
In Section \ref{sec_proof}, we shows the proofs of the main results.
Section \ref{sec: summary} presents the summary and discussion.

\section{Setting of the model}
\label{sec_setting}
Let $G=(V,E)$ be a simple connected graph, where $V=V(G)$ is the set of vertices and $E=E(G)$ is the set of edges.
Let $A=A(G)$ be a set of symmetric arcs induced by $E(G)$.
The origin and terminus vertices of $a \in A$ are denoted by $o(a)$ and $t(a)$, respectively.
We write the inverse of arc of $a \in A$ as $a^{-1}$.
Remark that $t(a)=o(a^{-1}), t(a^{-1})=o(a)$.
The degree of $v \in V$ is denoted by $\Deg(v)=|\left\{a \in A\, \middle| \, t(a)=v  \right\}|$.
Let $J(n,k)$ be the Johnson graph.
The set of the vertices of the Johnson graph, $V(J(n,k))$, is the set of $k$-element subsets of $[n]=\{1,2,...,n\}$.
Two vertices $v,v' \in V(J(n,k))$ are adjacent if and only if $|v\cap v'|=k-1$ holds.
Note that the number of vertices $N$ is $\binom{n}{k}$, and $J(n,k)$ is $d$-regular, where 
\begin{align*}
  d=k(n-k).
\end{align*}
The diameter of $J(n,k)$ is $\mathrm{min}\{k,n-k\}$, however, in this manuscript, we assume that $k$ is fixed and $n$ is sufficiently large.
Thus, we consider $J(n,k)$ has the diameter $k$.
%The star graph, denoted by $S_{m}$, is the tree with one center vertex and $m$ leaves.
Let $S_{m}$ be the star graph, which one center vertex is connected to all the other $m$ vertices and no other pairs of vertices are connected.
% We should remark that we assume that the number of leaves $m$ is sufficiently smaller than the degree $d$. 
Here, let $J(n,k) \wedge_{v^{*}} S_{m}$ be the graph identifying a fixed vertex of $J(n,k)$ with the center vertex of $S_{m}$.
This identified vertex is denoted by $v^{*}$.
Then, we decompose the set of arcs and vertices as follows:
\begin{align*}
  &A:=A(J(n,k) \wedge_{v^*} S_{m})= A_{J} \sqcup A_{S}, \\
  &V:=V(J(n,k) \wedge_{v^*} S_{m})= V_{J} \sqcup V_{S} \sqcup \{v^{*}\},
\end{align*}
where,
\begin{align*}
  A_{J}=A(J(n,k)),&\quad A_{S}=A(S_{m}), \\
  V_{J}=V(J(n,k)) \setminus \{v^{*}\}, &\quad V_{S}=V(S_{m}) \setminus \{v^{*}\}.
\end{align*}

The Hilbert space $\HA$ and $\HV$ is spanned by the set of arcs and vertices,
that is, 
\begin{align*}
  \HA=\mathrm{Span}\{\ket{a}\,|\, a \in A\}, \quad \HV=\mathrm{Span}\{\ket{v}\,|\, v \in V\},
\end{align*}
respectively. 
The time evolution matrix $U$ on $\HA$ is defined by
\begin{align*}
  U=S(2K^{*}K-I_{A}),
\end{align*}
where $S$ is the shift matrix, $K$ is the boundary matrix and $I_{A}$ is the identity matrix on $\HA$.
% Note that $U$ is called Grover matrix.
The shift matrix $S : \HA \rightarrow \HA$ and the boundary matrix $K : \HA \rightarrow \HV$ is defined as follows:
\begin{align*}
  % \label{eq: shift}
  % S_{a,b}=\delta_{a, b^{-1}}.
  S\ket{a}=\ket{a^{-1}}.
\end{align*}
\begin{equation*}
  % \label{eq: boundary}
  K\ket{a}=
  \begin{dcases*}
    \sum_{t(a)=v}\frac{1}{\sqrt{\mathrm{deg}(v)}} \ket{v} &: $v \notin V_{S}$, \\
    0 &: $\text{otherwise}$.
  \end{dcases*}
\end{equation*}
By easily calculation, we can check that $S^{2}=I_{A}$ and $KK^{*}=I_{V}$ hold, where $I_{V}$ is the identity matrix on $\HV$.
From the definition of $S$ and $K$, the time evolution matrix $U$ is described by
\begin{equation*}
  U\ket{a}=
  \begin{dcases*}
    \sum_{o(b)=t(a)} \left(\frac{2}{\mathrm{deg}(t(a))}-\delta_{a^{-1},b}\right) \ket{b} &: $t(b) \notin V_{S}$, \\
    -\ket{b} &: $t(b) \in V_{S}$,
  \end{dcases*}
\end{equation*}
where $\delta_{a,b}$ is the Kronecker delta.
This definition shows that each leaf has its phase reversed after the action of $U$.

The initial state $\ket{\psi_{0}} \in \HA$ is set as the uniform superposition on $J(n,k)$, that is,
\begin{equation}
  \label{eq: initial state}
  \ket{\psi_{0}}=\frac{1}{\sqrt{|A_{J}|}} \sum_{a \in A_{J}} \ket{a}.
\end{equation}
% We should remark that the cardinality of $A_{J}$ is 
% \begin{align*}
%   |A_{J}|=k(n-k)\dbinom{n}{k}.
% \end{align*}
% The dynamics of this model is given by
% \begin{align*}
%   \ket{\psi_{t}}=U^{t}\ket{\psi_{0}}.
% \end{align*}
Let $\ket{\psi_{t}}$ be the $t$-th iteration of $U$, that is,
\begin{align*}
  \ket{\psi_{t+1}}=U\ket{\psi_{t}}.
\end{align*}
% where $\ket{\psi_{0}} \in A$ is the state afte $t$ steps.
The finding probability of the star graph after $t$ steps is defined by
\begin{align*}
  p_{s}(t)=\sum _{a \in A_{S}} |\braket{a|\psi_{t}}|^{2}.
\end{align*}
We are particularly interested in finding the optimal running time $\tau$, at which the probability
$p_{s}(\tau)$ of finding star graph $S_{m}$ is maximized.

\section{Main result}
\label{sec_result}
This section shows the probability of finding the star graph and the optimal time $\tau$.
\begin{theorem}
  \label{thm: prob}
  For sufficiently large $N$, there exist a time $\tau$ such that
  \begin{align*}
    &p_{s}(2\ell \cdot \tau)=o(1), \\
    &p_{s}((2\ell+1) \cdot \tau)=1+o(1) \\
  \end{align*}
  for $\ell=0,1,2,...$.
\end{theorem}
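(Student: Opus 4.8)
The plan is to reduce the analysis of $U$ to an effective low-dimensional problem by exploiting the symmetry of $J(n,k)\wedge_{v^*}S_m$ and then apply Kato's finite-dimensional perturbation theory, treating $S_m$ as a perturbation parameter attached at $v^*$. First I would identify the invariant subspace of $\mathcal H_A$ generated by the initial state $\ket{\psi_0}$ under $U$. Because $\ket{\psi_0}$ is the uniform state on $A_J$ and the whole construction is invariant under the stabilizer of $v^*$ inside $\mathrm{Aut}(J(n,k))$, the relevant dynamics takes place on a small space. Concretely, I expect the Johnson part to contribute a chain of ``distance shells'' around $v^*$: for each distance $j\in\{0,1,\dots,k\}$ one gets a (symmetrized) arc state pointing ``toward'' $v^*$ and one pointing ``away,'' so the Johnson side contributes an $O(k)$-dimensional module on which $U$ acts as a Jacobi-type (tridiagonal) unitary; the star side contributes a $2$-dimensional module (one symmetrized arc into the leaves, one out), with the leaf phase flip $-1$ built in. This is the quantum-walk analogue of the quotient graph / equitable partition, and it is exactly the structure used in earlier pulsation and spatial-search analyses; I would state it as a lemma and verify it by direct computation with $S$ and $K$.

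Next I would set up the perturbation. Write $\varepsilon = \varepsilon(n,k) = \sqrt{|A_S|/(d_{v^*})}$ or a similarly normalized small parameter measuring the ``weight'' of the star at $v^*$ relative to the Johnson degree $d=k(n-k)$; in the regime $n\gg k$ this is $o(1)$, and in fact the target time scale $O(\sqrt{N^{1+1/k}})$ suggests $\varepsilon$ should be of order $N^{-(1+1/k)/2}$ up to constants, which I would pin down from $|A_S|=2m$ and $|A_J|=Nd$. At $\varepsilon = 0$ the two modules decouple: the Johnson module has its own unitary $U_0^{(J)}$ with $\ket{\psi_0}$ an eigenvector (eigenvalue $1$, since the uniform state is Grover-fixed on a regular graph with no boundary), and the star module is separate. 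Turning on $\varepsilon$ couples them and splits the degenerate eigenvalue $1$ (or a pair $e^{\pm i\theta}$) into a cluster; Kato's theory gives the eigenprojections and eigenvalues of $U_\varepsilon$ as convergent series in $\varepsilon$. The key quantitative output I need is the splitting: the two (or few) eigenvalues emerging from the relevant cluster are $e^{\pm i c\varepsilon + O(\varepsilon^2)}$ for an explicit constant $c$ depending on $k$. I would compute $c$ from first-order perturbation theory, i.e. from the matrix element of the coupling between the normalized Johnson-side state localized near $v^*$ and the star-side state.

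Given the splitting, the rest is a two-level (Rabi-type) computation. Expanding $\ket{\psi_0}$ in the perturbed eigenbasis, its overlap is overwhelmingly (to order $1-o(1)$) on the small cluster coming from eigenvalue $1$, within which $\ket{\psi_0}$ and the corresponding star-localized state span a $2$-dimensional space where $U_\varepsilon$ acts, up to $o(1)$ errors, as a rotation by angle $\sim c\varepsilon$. Hence $\ket{\psi_t}$ oscillates between (essentially) $\ket{\psi_0}\in\mathcal H_{A_J}$ and a state supported on $A_S$ with period $\tau \sim \pi/(c\varepsilon) = \Theta(\varepsilon^{-1}) = \Theta(\sqrt{N^{1+1/k}})$; evaluating at even and odd multiples of $\tau$ gives $p_s(2\ell\tau)=o(1)$ and $p_s((2\ell+1)\tau)=1+o(1)$. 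I would control the error terms (the $O(\varepsilon^2)$ corrections to eigenvalues, the $o(1)$ leakage of $\ket{\psi_0}$ onto the rest of the spectrum, and the accumulated phase error over $\ell\tau$ steps, which stays $o(1)$ as long as $\ell$ is fixed or grows slowly) using the norm bounds in Kato's theory.

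The main obstacle I anticipate is making the perturbative estimates uniform and explicit in the right regime: one must show the relevant eigenvalue cluster is genuinely isolated from the rest of $\mathrm{spec}(U_0)$ by a gap that does not close too fast as $n\to\infty$ (so that Kato's series converge and the $o(1)$ remainders are controlled), and one must verify that the initial state really does concentrate on that cluster rather than spreading over the $O(k)$-dimensional Johnson module. This requires understanding the spectrum of $U_0^{(J)}$ on the symmetrized shell module — equivalently, the eigenvalues of the Johnson graph's quotient / Krawtchouk-type structure near the Perron eigenvalue — well enough to locate the gap, and tracking how the first-order coupling constant $c$ scales with $k$ (this is where the $1/k$ in the exponent enters, presumably because the initial uniform weight reaches $v^*$ only through a depth-$k$ funnel, suppressing the effective coupling by a factor like $\sqrt{d^{\,k}}\sim N$-ish). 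I would isolate these as technical lemmas about $J(n,k)$ and its walk matrix before assembling the final two-level argument.
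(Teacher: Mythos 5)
Your overall architecture (collapse to the symmetrized ``distance-shell'' module around $v^{*}$, treat the star as a perturbation, extract a two-level oscillation with angle $\theta$ and period $\tau\sim\pi/\theta$) is the same as the paper's, which reduces $U$ to the $(k+2)$-dimensional quotient transition matrix $T$ on the shells $X_{-1},X_0,\dots,X_k$ and lifts $\cos\theta\in\mathrm{Spec}(T)$ to $e^{\pm i\theta}\in\mathrm{Spec}(U)$ by the spectral mapping theorem of \cite{HKSS14}. The spectral-gap worry you raise is also handled there essentially for free: the unperturbed quotient matrix $T^{(0)}$ has eigenvalues $j/k$, so the eigenvalue $1$ is simple and separated by the constant gap $1/k$.

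However, there is a genuine gap at the central quantitative step. You propose to ``compute $c$ from first-order perturbation theory, i.e.\ from the matrix element of the coupling'' between the Johnson-side state near $v^{*}$ and the star state. In the natural expansion parameter $\epsilon=1/d$ (the paper writes $T(\epsilon)=T^{(0)}+\epsilon T^{(1)}+\cdots$), the first-order correction to the eigenvalue $1$ vanishes --- and so do the second through $k$-th order corrections: $\lambda^{(1)}=\cdots=\lambda^{(k)}=0$. The reason is structural: the left eigenvector of $T^{(0)}$ at eigenvalue $1$ is supported on shell $k$ while $T^{(q)}\ket{u}$ is supported on shell $0$, the reduced resolvent $S$ is upper triangular, and $T^{(1)}$ is tridiagonal, so a nonzero $(k,0)$-entry requires at least $k+1$ factors of $T^{(1)}$. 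The first surviving term is $\lambda^{(k+1)}=(-1)^{k+1}m\bigl((T^{(1)}S)^{k}\bigr)_{k,0}=-m\,k!\,k^{k}$, computed from Kato's explicit trace formula for the $(k+1)$-th order coefficient. This is exactly where $\theta\approx\sqrt{2m\,k!\,k^{k}\,\epsilon^{k+1}}$ and hence the exponent $\tfrac{1}{2}(1+1/k)$ of $N$ come from. Your ``depth-$k$ funnel'' heuristic points at the right phenomenon, but a first-order matrix-element computation returns zero and gives no mechanism to produce the constant $m\,k!\,k^{k}$ or even the correct power of $N$; without the high-order (order $k+1$) Kato computation, the argument does not close. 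A secondary, fixable issue: your candidate small parameter $\sqrt{|A_S|/\deg(v^{*})}\sim\sqrt{2m/d}\sim N^{-1/(2k)}$ is not the right scale; the correct one is $\theta\approx\sqrt{|A_S|/|A_J|}=\sqrt{2m/(Nd)}\sim N^{-(1+1/k)/2}$, and establishing that identification is precisely what the $(k+1)$-th order computation accomplishes.
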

% Theorem \ref{thm: prob} implies that the states transrate between the Johnson graph and star graph periodically.
% Thus, Theorem \ref{thm: prob} shows the the pulsation occurs on this setting.
% 定理 \ref{thm: prob}は，$t=0$のときにはジョンソングラフ側に量子状態が存在していたが，時刻$\tau$においてそれらのほとんど全てがスターグラフ側に
% 移動するという現象が存在することを示している.さらに$\tau$だけ時間が経過すると，今度は逆にスターグラフ側からジョンソングラフ側へ量子状態の移動が起こり，これが周期的に発生することが見て取れる.
% 一方，定理\ref{thm: prob}において重要な要素である$\tau$については次のような結果が得られている.
\begin{theorem}
  \label{thm: time}
  For sufficiently large $N$, the optimal time $\tau$ is obtained as
  \begin{align*}
    \tau\sim\frac{\pi\sqrt{k\cdot (k!)^{1/k}}}{2\sqrt{2m}}\sqrt{N^{1+1/k}}=O\left(\sqrt{N^{1+1/k}}\right).
  \end{align*}
\end{theorem}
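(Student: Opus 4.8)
The plan is to extract $\tau$ from the spectral analysis of $U$ that underlies Theorem \ref{thm: prob}. The pulsation period is governed by the gap between the two eigenvalues of $U$ that are close to $1$ (or to a common point on the unit circle) and carry the overlap with the initial state $\ket{\psi_0}$; writing these as $e^{\pm i\theta(N)}$ with $\theta(N)\to 0$, the finding probability $p_s(t)$ oscillates with half-period $\tau \sim \pi/(2\theta(N))$, so the whole task reduces to computing the leading asymptotics of $\theta(N)$. First I would restrict attention to the invariant subspace generated by $\ket{\psi_0}$ under $U$; by the symmetry of $J(n,k)\wedge_{v^*}S_m$ (vertex-transitivity of the Johnson graph fixing $v^*$, and permutation symmetry of the $m$ leaves) this subspace is low-dimensional, spanned by the uniform states on the ``shells'' of arcs at combinatorial distance $0,1,\dots,k$ from $v^*$ in $J(n,k)$ together with the inward/outward leaf states. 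I would write $U$ restricted to this subspace as a small banded (essentially tridiagonal-in-shells) matrix whose entries are explicit functions of $n,k,m$ coming from the $\tfrac{2}{\deg}-\delta$ formula for $U$.

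Next I would set up the perturbation. When $m=0$ the star is absent and the relevant block of $U$ has $1$ as an eigenvalue with the Perron-type eigenvector proportional to $\ket{\psi_0}$ plus the shell states; the star contributes a perturbation of size controlled by the ratio of $|A_S|$-type quantities to $|A_J| = d N = k(n-k)N$, i.e. a perturbation parameter $\varepsilon$ polynomially small in $N$. Applying Kato's finite-dimensional perturbation theory (as announced in the abstract), the degenerate eigenvalue $1$ splits into a pair $e^{\pm i\theta}$ with $\theta$ given to leading order by a Rayleigh-quotient-type formula: $\theta^2 \asymp$ (perturbation matrix element)$^2$ divided by the spectral gap to the rest of the block, all evaluated on the limiting eigenvector. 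The combinatorics enters through the weights of the shell states: the outermost shell (distance $k$) has $\binom{n}{k}-\binom{n-1}{k}\sim \binom{n}{k}\cdot k/n$ vertices and each vertex there connects to $v^*$ only via a path of length $k$, which is exactly what produces the $N^{1/k}$ correction — tracking the normalization $1/\sqrt{|A_J|}$ through $k$ successive shells gives a factor $n^{k}\sim (k!\,N)$ under a $k$-th power, hence $\theta \asymp \sqrt{2m}\,N^{-1/2}\big/\sqrt{k(k!)^{1/k}N^{1/k}}$, i.e. $\theta \sim \dfrac{2\sqrt{2m}}{\pi\sqrt{k(k!)^{1/k}}}\,N^{-(1+1/k)/2}$, and $\tau\sim\pi/(2\theta)$ yields the claimed constant $\dfrac{\pi\sqrt{k(k!)^{1/k}}}{2\sqrt{2m}}\sqrt{N^{1+1/k}}$.

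The routine parts are: diagonalizing (or Jordan-decomposing) the unperturbed shell matrix, and verifying that the neglected eigenvalues of $U$ stay bounded away from $e^{\pm i\theta}$ so the two-dimensional reduction is legitimate — these are finite computations with Johnson-graph intersection numbers. I expect the \textbf{main obstacle} to be bookkeeping the exact leading constants: one must carry the shell cardinalities $\binom{n}{k}-\binom{n-1}{k},\binom{n-1}{k-1}-\binom{n-2}{k-1},\dots$ and the degree factors $\sqrt{\deg(v)} = \sqrt{k(n-k)}$ through the perturbation series without dropping a factor of $2$, $k$, or $k!$, and to show that the higher-order terms in Kato's expansion are genuinely $o$ of the leading term uniformly as $n\to\infty$ with $k$ fixed — in particular that the $O(\varepsilon^2)$ shift of the eigenvalue modulus off the unit circle is negligible compared to the $O(\varepsilon)$ angular splitting, which is what guarantees the oscillation (and hence $\tau$) is not damped on the time scale $\tau$ itself.
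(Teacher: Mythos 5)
Your overall architecture is the same as the paper's: restrict $U$ to the invariant subspace spanned by uniform states on the distance shells $X_j$ from $v^*$ (plus the leaf states), reduce the spectral problem to a $(k+2)$-dimensional tridiagonal transition matrix $T$, lift an eigenvalue $\cos\theta\in\mathrm{Spec}(T)$ to $e^{\pm i\theta}\in\mathrm{Spec}(U)$ via the spectral map of \cite{HKSS14}, and read off $\tau\sim\pi/(2\theta)$ from $p_s(t)\approx\sin^2(t\theta)$. The paper's perturbation parameter is $\epsilon=1/d\asymp N^{-1/k}$ (the $d\to\infty$ limit is the unperturbed problem, with the star entering through a $-m$ in the $(0,1)$ entry of $T^{(1)}$), which is close in spirit to your setup.

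The genuine gap is in the one step that carries all the content: the computation of $\theta$. You propose a second-order, Rayleigh-quotient-type formula, ``$\theta^2\asymp$ (perturbation matrix element)$^2$ divided by the spectral gap.'' Generically that yields $1-\lambda(\epsilon)=O(\epsilon^2)$, hence $\theta\asymp\epsilon\asymp N^{-1/k}$ and $\tau\asymp N^{1/k}$, which is \emph{not} the claimed $\sqrt{N^{1+1/k}}=N^{(k+1)/(2k)}$ for any $k\geq 2$. The correct scaling requires showing that the first $k$ Kato coefficients vanish, $\lambda^{(1)}=\cdots=\lambda^{(k)}=0$, and that $\lambda^{(k+1)}=-m\,k!\,k^k$ exactly, so that $1-\lambda(\epsilon)=m\,k!\,k^k\epsilon^{k+1}+O(\epsilon^{k+2})$ and $\theta\approx\sqrt{2m\,k!\,k^k\,\epsilon^{k+1}}$. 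The paper proves this via the trace formula $\lambda^{(n)}=\sum_p\frac{(-1)^p}{p}\sum\mathrm{Tr}\bigl(T^{(\nu_1)}S^{(\omega_1)}\cdots T^{(\nu_p)}S^{(\omega_p)}\bigr)$ together with a structural observation: $S$, $P$, $T^{(0)}$ and $T^{(q)}$ ($q\geq2$) are upper triangular while $T^{(1)}$ is tridiagonal, so the $(k,0)$ entry that the trace reduces to is nonzero only after at least $k$ factors of $T^{(1)}$, forcing $p=k+1$; the constant then comes from $\bigl((T^{(1)}S)^k\bigr)_{k,0}=\prod_{j=1}^k j^2\cdot\prod_{j=0}^{k-1}\bigl(-\tfrac{1}{1-j/k}\bigr)=(-1)^k k!\,k^k$. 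Your remark about ``tracking the normalization through $k$ successive shells'' gestures at this mechanism but supplies neither the vanishing of the lower orders nor the coefficient, and the final constant in your $\tau$ is asserted rather than derived (your displayed $\theta$ is also off by a factor $2/\pi$ from your own relation $\tau\sim\pi/(2\theta)$). A smaller point: the worry about the eigenvalue modulus drifting off the unit circle is moot — $U$ is unitary; the drift occurs in $\mathrm{Spec}(T)\subset[-1,1]$ and is precisely what defines the real angle $\theta$.
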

% Theorem \ref{thm: time} implies that the optimal time $\tau$ depends on the parameter $m$ and $k$.
Theorem \ref{thm: prob} shows the existence of the pulsation on this setting, that is, almost all states periodically transfer between the Johnson graph and star graph.
Theorem \ref{thm: time} shows that the complete graph (i.e. $k=1$ case) gives no-speed up while the Johnson graph $J(n,k) \, (k \geq 2)$ gives a speed up between super-linear and sub-quadratic than classical random walk.
Note that $k$ is the diameter of the Johnson graph.
 % Note that $k$ is the diameter of the Johnson graph and $m$ is the number of the leaves of the star graph.
% In particular, the larger the diameter and number of leaves are, the smaller the order of $\tau$ becomes, indicating the time steps until the state trantision from the Johnson graph to the star graph becomes faster.
Figure 1 displays the numerical simulation of the probability finding the star graph.

\begin{figure}[htbp]
  \begin{minipage}[b]{0.5\linewidth}
    \centering
    \includegraphics[keepaspectratio, scale=0.55]{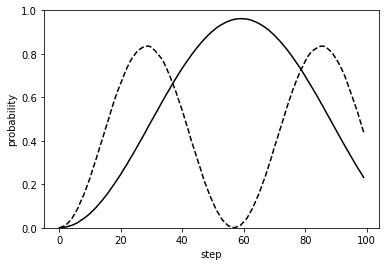}
    \subcaption{}
  \end{minipage}
  \begin{minipage}[b]{0.5\linewidth}
    \centering
    \includegraphics[keepaspectratio, scale=0.55]{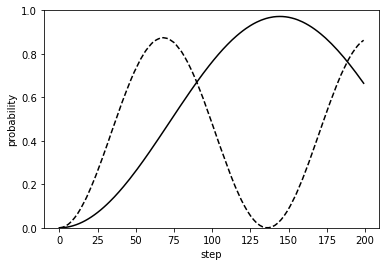}
    \subcaption{}
  \end{minipage} 
  \caption{
    The probability of the finding star graph on the Johnson graph. The solid and dot curves correspond to the case of $m=1$ and $m=5$, respectively.
    (a) and (b) are simulated on $J(15,2)$ and (b) $J(15,3)$, respectively.
  }
\end{figure}

\section{Proof of main results}
\label{sec_proof}

% \subsection{Connection between random walk and quantum walk}
% \label{subsec_SMT}
{\bf Proof.}
First, $V_{J}$ is devided into the disjoint sets as follows:
\begin{align*}
  X_{j}=\{v \in V_{J} \,|\, |v \cap v^{*}| =k-j \} \quad (0 \leq j \leq k).
\end{align*}
Note that $j$ is the length of the shortest path between $v^{*}$ and $v \in X_{j}$.
% The cardinality of $X_{j}$ is 
% \begin{align*}
%   |X_{j}|=\dbinom{k}{j}\dbinom{n-k}{j} \quad (0 \leq j \leq k).
% \end{align*}
We define
\begin{align*}
    \ket{X_{-1}}=\frac{1}{\sqrt{|X_{-1}|}}\sum_{v \in V_{S}} \ket{v},
  \quad \ket{X_{j}}=\frac{1}{\sqrt{|X_{j}|}}\sum_{v \in X_{j}} \ket{v} \quad (0 \leq j \leq k).
\end{align*}
Remark that
\begin{align*}
    |X_{-1}|=m, \quad 
    |X_{j}|=\dbinom{k}{j}\dbinom{n-k}{j} \quad (0 \leq j \leq k).
\end{align*}
Then, the subspace $\HVI=\mathrm{Span}\{ \ket{X_{j}}\,|\, j=-1,0,1,...,k\}$ is invariant under the adjacency matrix $M$ of $J(n,k) \wedge_{v^{*}} S_{m}$.
By using fact, we have
\begin{align*}
  M\ket{X_{-1}}&=m \ket{X_{0}}, \\
  M\ket{X_{0}}&=c_{1}\ket{X_{1}}+ a_{0}\ket{X_{0}} + m \ket{X_{-1}},\\
  M\ket{X_{j}}&=c_{j+1}\ket{X_{j+1}}+a_{j}\ket{X_{j}}+b_{j-1}\ket{X_{j-1}} \quad (1 \leq j \leq k), 
\end{align*}
where
\begin{align*}
  a_{j}=j(n-2j), \quad b_{j}=(k-j)(n-k-j), \quad c_{j}=j^{2} \quad (0 \leq j \leq k), 
\end{align*}
and we set $\ket{X_{k+1}}=0$ for sake of simplicity.
Remark that $b_{0}=d$ and
\begin{align}
  % \label{eq: array}
  a_{j}+b_{j}+c_{j}=d \quad &(0 \leq j \leq k), \notag \\
  \label{eq: dist}
  b_{j}|X_{j}|=c_{j+1}|X_{j+1}| \quad &(0 \leq j \leq k).
\end{align}
Let $T$ be the $(k+1) \times (k+1)$ matrix which the adjacency matrix $M$ divided by each degree of $v \in X_{j}$.
Note that $T$ is the transition matrix of random walk on the projected path with self loop, that is, $T$ acts on $\HVI$. 
The matrix form of $T$ is as follows:
\begin{align*}
  T=
  \begin{pmatrix}
    0 & x & 0 & 0 & \dots & 0 \\ 
    p_{1} & r_{1} & q_{1} & 0 & \dots & 0 \\
    0 & p_{2} & r_{2} & q_{2} & \dots & 0 \\
    \vdots & \ddots & \ddots & \ddots & \ddots & \vdots \\
    \vdots & \ddots & \ddots & p_{k-1} & r_{k-1} & q_{k-1} \\
    0 & \dots & \dots & \dots & p_{k} & r_{k} \\
  \end{pmatrix},
\end{align*}
where
\begin{align*}
  p_{j}=\frac{c_{j}}{d}, \quad q_{j}=\frac{b_{j}}{d}, \quad r_{j}=\frac{a_{j}}{d}, \quad x=\frac{d}{d+m} \quad (1\leq j \leq k).
\end{align*}
Note that $p_{j},q_{j}$ and $r_{j}$ is the transfer probability from $X_{j}$ to $X_{j-1},X_{j+1}$ and $X_{j}$, respectively.

Let $\ket{\pi}=\sum_{j=-1}^{k} \pi_{j}\ket{X_{j}}$ be a unit vector such that
\begin{align*}
  &\pi_{-1}=\braket{X_{-1}|\pi}=C_{\pi}, \\
  &\pi_{j}=\braket{X_{j}|\pi}=\frac{d}{m}|X_{j}|C_{\pi},
\end{align*}
% \begin{align*}
%   \pi(-1)&=C_{\pi} \\
%   \pi(j)&=\frac{d}{m}|X_{j}|C_{\pi}
% \end{align*}
where $C_{\pi}=m/(m+|A_{J}|)$ is the normalized constant.
Then, by using Eq.\ (\ref{eq: dist}), we see that $\ket{\pi}$ satisfies the following equation
\begin{align}
  \label{eq: DBC}
  q_{j}\pi_{j}=p_{j}\pi_{j+1},
\end{align}
which is called the detail balanced condition.
We put a matrix $J$ as follows:
\begin{align*}
  % T=D_{\pi}^{-1/2}JD_{\pi}^{1/2}
  J=D_{\pi}^{1/2}TD_{\pi}^{-1/2},
\end{align*}
where $D_{\pi}^{1/2}\ket{X_{j}}=\sqrt{\pi_{j}}\ket{X_{j}} \quad (j=-1,0,...,k)$ holds.
% We should remark that $J$ is the 
% 注意することとして，行列$J$はランダムウォークの推移確率行列$T$を対称化した行列になる.
% このように定めると，次のような性質を満たすことが知られている.
Then we see the following propositions.
\begin{prop}
  \label{prop: isospec}
  $T$ and $J$ are isospectral, that is,
  \begin{align*}
    \mathrm{Spec}(T)=\mathrm{Spec}(J).
  \end{align*}
  Let us set $\mu \in \mathrm{Spec}(T)$. Then, it follows that
  \begin{align*}
      \mathrm{Ker}(\mu-T)=\mathrm{Ker}(\mu-D_{\pi}^{1/2}).
  \end{align*}
  % Let $x$ be the eigenvector of $T$ associated to the eigenvalue $\mu$.
  % Then, the eigenvector of $J$ associated to $\mu$ is $D_{\pi}^{1/2}x$.
\end{prop}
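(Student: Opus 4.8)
The plan is to read both assertions off the single fact that $J$ is a genuine similarity transform of $T$ by an \emph{invertible} diagonal matrix, which puts us in the classical reversible-Markov-chain symmetrization setting.

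First I would check that $D_{\pi}^{1/2}$ is invertible. By construction all coordinates of $\ket{\pi}$ are strictly positive: $C_{\pi}=m/(m+|A_{J}|)>0$, $d=k(n-k)>0$, and $|X_{j}|=\binom{k}{j}\binom{n-k}{j}>0$, so $D_{\pi}^{1/2}=\mathrm{diag}(\sqrt{\pi_{-1}},\sqrt{\pi_{0}},\dots,\sqrt{\pi_{k}})$ has strictly positive diagonal entries and $D_{\pi}^{-1/2}$ exists. Hence $J=D_{\pi}^{1/2}TD_{\pi}^{-1/2}$ is similar to $T$, and two similar matrices share the same characteristic polynomial, since $\det(\mu I-J)=\det(D_{\pi}^{1/2})\det(\mu I-T)\det(D_{\pi}^{-1/2})=\det(\mu I-T)$. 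Therefore $\mathrm{Spec}(T)=\mathrm{Spec}(J)$, with matching multiplicities.

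For the second claim, fix $\mu\in\mathrm{Spec}(T)$. For an arbitrary vector $v$ one has $Tv=\mu v$ if and only if $D_{\pi}^{1/2}TD_{\pi}^{-1/2}(D_{\pi}^{1/2}v)=\mu(D_{\pi}^{1/2}v)$, i.e. $J(D_{\pi}^{1/2}v)=\mu(D_{\pi}^{1/2}v)$. So $v\in\mathrm{Ker}(\mu-T)$ exactly when $D_{\pi}^{1/2}v\in\mathrm{Ker}(\mu-J)$; thus $D_{\pi}^{1/2}$ restricts to a linear isomorphism from $\mathrm{Ker}(\mu-T)$ onto $\mathrm{Ker}(\mu-J)$, equivalently $\mathrm{Ker}(\mu-T)=D_{\pi}^{-1/2}\,\mathrm{Ker}(\mu-J)$, which is the asserted identification of the two eigenspaces.

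Although it is not part of the statement, I would also record why $J$ is the useful object: the detailed balanced condition (\ref{eq: DBC}), together with the tridiagonal shape of $T$ and the relation (\ref{eq: dist}), says precisely that $\sqrt{\pi_{i}}\,T_{ij}/\sqrt{\pi_{j}}$ is symmetric in $i,j$, so $J$ is a real symmetric Jacobi matrix; consequently $T$ has real spectrum and a complete set of eigenvectors, and the eigenvectors of $J$ pulled back by $D_{\pi}^{-1/2}$ are orthonormal in the $\pi$-weighted inner product — the form that feeds the later perturbation analysis. I do not expect a real obstacle here; the argument is elementary. The only points deserving a line of care are the positivity of each $\pi_{j}$ (so that $D_{\pi}^{\pm1/2}$ is legitimately defined) and the boundary bookkeeping, namely that the entry $x$ plays the role of $q_{0}$ and is covered by (\ref{eq: DBC})–(\ref{eq: dist}), so that the similarity/symmetry also holds at the endpoint coordinate.
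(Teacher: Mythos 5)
Your argument is correct, and it is the standard similarity argument; the paper itself states Proposition \ref{prop: isospec} without any proof, so there is no alternative route to compare against --- your write-up simply fills that gap. Two points are worth recording. First, you rightly isolate the only hypothesis that needs checking, namely that every $\pi_j$ is strictly positive so that $D_{\pi}^{\pm 1/2}$ is well defined and invertible; once that is in hand, isospectrality and the eigenspace correspondence are immediate from $J=D_{\pi}^{1/2}TD_{\pi}^{-1/2}$. Second, the second display of the proposition as printed, $\mathrm{Ker}(\mu-T)=\mathrm{Ker}(\mu-D_{\pi}^{1/2})$, is evidently a typographical slip (the right-hand side is generically $\{0\}$); what you prove, $\mathrm{Ker}(\mu-J)=D_{\pi}^{1/2}\,\mathrm{Ker}(\mu-T)$, is the relation the paper actually uses later --- in the proof of Lemma \ref{lem: element} the eigenvector $g$ of $J$ is taken to be $D_{\pi}^{1/2}f$ with $f$ an eigenvector of $T$, which is exactly your identification. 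Your closing remark that detailed balance makes $J$ real symmetric, hence $T$ diagonalizable with real spectrum, is also the reason the subsequent spectral mapping (Proposition \ref{prop: SMT}) applies, so it is a useful addition even though it is not part of the statement.
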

% \begin{proof}
%   $\mu$を行列$T$の固有値，対応する固有ベクトルを$x$とおくと，$Tx=\mu x$の関係式が成立する.
%   $J=D_{\pi}^{1/2}TD_{\pi}^{-1/2}$に注意すると，以下のように計算することができる.
%   \begin{align*}
%     Tx&=\mu x \\
%     D_{\pi}^{1/2}Tx&=\mu D_{\pi}^{1/2}x \\
%     D_{\pi}^{1/2}TD_{\pi}^{-1/2}D_{\pi}^{1/2}x&=\mu D_{\pi}^{1/2}x \\
%     JD_{\pi}^{1/2}x&=\mu D_{\pi}^{1/2}x
%   \end{align*}
%   したがって，$J$は固有値$\mu$をもち，対応する固有ベクトルは$D^{1/2}_{\pi}x$となることがわかる.
%   任意の固有値$\mu \in \mathrm{Spec}(T)$に対して同様の計算を行うことで，$\mathrm{Spec}(T)\subset\mathrm{Spec}(J)$が成立する.
%   逆の包含関係も同様にして証明することが出来る.
% \end{proof}
% 命題\ref{prop: isospec}より，$J$の固有値，固有ベクトルの解析は$T$の固有値，固有ベクトルの解析に帰着されることを表している.
%ここで，本研究における重要な主張の1つである量子ウォークのスペクトル写像定理\cite{HKSS14}を紹介する.
\begin{prop}
  \label{prop: SMT}
  {\rm (\cite{HKSS14})} \\
  Let $\cos \theta \in \mathrm{Spec}(J)$ and $\ket{g} \in \mathrm{Ker}(\cos \theta-J)$ with $\cos \theta \in (-1,1)$.
  Then, $e^{\pm i \theta} \in \mathrm{Spec}(U)$,
  % The eigenvector $\ket{\psi_{\pm \theta}}$ associated to the eigenvalue $e^{\pm i \theta}$ is 
  \begin{align*}
    \ket{\psi_{\pm \theta}}
    =\frac{1}{\sqrt{2}|\sin \theta|}\left(K^{*}-e^{\pm i \theta}SK^{*}\right)\ket{g} \in \mathrm{Ker}(e^{\pm i \theta}-U),
  \end{align*}
  and $||\psi_{\pm \theta}||=||g||$.
\end{prop}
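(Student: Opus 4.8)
\noindent The plan is to prove this as the spectral mapping theorem for Szegedy--Grover walks (\cite{HKSS14}): for each eligible eigenvalue $\cos\theta$ of $J$ I would exhibit an explicit two-dimensional $U$-invariant subspace on which $U$ has eigenvalues $e^{\pm i\theta}$. Write $d_{A}=2K^{*}K-I_{A}$, so $U=Sd_{A}$. The only structural inputs are the identities from Section \ref{sec_setting}: $S^{2}=I_{A}$ (so $S$ is a unitary involution) and $KK^{*}=I_{V}$, from which $Kd_{A}=2(KK^{*})K-K=K$ and, by taking adjoints (note $d_{A}^{*}=d_{A}$), $d_{A}K^{*}=K^{*}$. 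One preliminary step is to record that, on $\HVI$, the discriminant $J=D_{\pi}^{1/2}TD_{\pi}^{-1/2}$ coincides with the compression of $KSK^{*}$: evaluating $K$, $S$, $K^{*}$ exhibits $KSK^{*}$ as the symmetrized normalized adjacency matrix with entries $A_{vw}/\sqrt{\Deg(v)\Deg(w)}$, and identifying it with $D_{\pi}^{1/2}TD_{\pi}^{-1/2}$ on $\HVI$ is Proposition \ref{prop: isospec} together with the detailed balance relation (\ref{eq: DBC}).

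Fix $\cos\theta\in(-1,1)\cap\mathrm{Spec}(J)$ and $\ket{g}\in\mathrm{Ker}(\cos\theta-J)$, and set $W_{g}=\mathrm{Span}\{K^{*}\ket{g},\,SK^{*}\ket{g}\}$. Using $d_{A}K^{*}=K^{*}$, $S^{2}=I_{A}$ and $KSK^{*}\ket{g}=\cos\theta\ket{g}$, I compute
\begin{align*}
  UK^{*}\ket{g}&=Sd_{A}K^{*}\ket{g}=SK^{*}\ket{g},\\
  USK^{*}\ket{g}&=Sd_{A}SK^{*}\ket{g}=S\bigl(2K^{*}(KSK^{*})\ket{g}-SK^{*}\ket{g}\bigr)=2\cos\theta\,SK^{*}\ket{g}-K^{*}\ket{g}.
\end{align*}
Hence $W_{g}$ is $U$-invariant, and in the ordered basis $(K^{*}\ket{g},SK^{*}\ket{g})$ the matrix of $U|_{W_{g}}$ is $\left(\begin{smallmatrix}0&-1\\ 1&2\cos\theta\end{smallmatrix}\right)$, with characteristic polynomial $\lambda^{2}-2\cos\theta\,\lambda+1$ and roots $e^{\pm i\theta}$ (non-real and distinct since $\cos\theta\in(-1,1)$). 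Solving $(U-e^{\pm i\theta}I_{A})(aK^{*}\ket{g}+bSK^{*}\ket{g})=0$ in $W_{g}$ forces $b=-e^{\pm i\theta}a$, so the eigenvectors are scalar multiples of $K^{*}\ket{g}-e^{\pm i\theta}SK^{*}\ket{g}$, i.e. of $\ket{\psi_{\pm\theta}}$; this yields $e^{\pm i\theta}\in\mathrm{Spec}(U)$ once $\ket{\psi_{\pm\theta}}\neq0$ is checked.

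For the normalization I would use $KK^{*}=I_{V}$ and the unitarity of $S$ to get $\|K^{*}\ket{g}\|=\|SK^{*}\ket{g}\|=\|g\|$, together with $\braket{K^{*}g|SK^{*}g}=\sand{g}{KSK^{*}}{g}=\sand{g}{J}{g}=\cos\theta\,\|g\|^{2}\in\RR$. Expanding,
\begin{align*}
  \bigl\|K^{*}\ket{g}-e^{\pm i\theta}SK^{*}\ket{g}\bigr\|^{2}
  &=2\|g\|^{2}-\bigl(e^{i\theta}+e^{-i\theta}\bigr)\cos\theta\,\|g\|^{2}\\
  &=2(1-\cos^{2}\theta)\,\|g\|^{2}=2\sin^{2}\theta\,\|g\|^{2}.
\end{align*}
With the stated prefactor $1/(\sqrt{2}\,|\sin\theta|)$ this gives $\|\psi_{\pm\theta}\|=\|g\|$; the prefactor is well defined, and $\ket{\psi_{\pm\theta}}\neq0$, precisely because $\cos\theta\in(-1,1)$ makes $\sin\theta\neq0$.

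I expect the main obstacle to lie not in the two-dimensional reduction—which is the routine computation above—but in the identification of the first paragraph: matching the combinatorially defined discriminant $J$ on $\HVI$ with the compression of $KSK^{*}$. In this model $K$ annihilates the arcs directed into the leaves of $S_{m}$ (equivalently, the coin acts as $-I_{A}$ on that subspace), so one must verify that the projected transition matrix $T$ of Section \ref{sec_proof} genuinely encodes that boundary behavior. This is the point where appealing to \cite{HKSS14} does real work; with it in place, the eigenpair construction above carries through verbatim.
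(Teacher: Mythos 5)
Your proposal is correct, but note that the paper offers no proof of this proposition at all: it is imported from \cite{HKSS14} as a black box, so there is nothing internal to compare against. What you have written is the standard proof of that spectral mapping theorem, and it checks out: with $U=S(2K^{*}K-I_A)$, $KK^{*}=I_V$ and $S^{2}=I_A$, the identities $UK^{*}\ket{g}=SK^{*}\ket{g}$ and $USK^{*}\ket{g}=2\cos\theta\,SK^{*}\ket{g}-K^{*}\ket{g}$ are exactly right, the $2\times 2$ matrix $\bigl(\begin{smallmatrix}0&-1\\ 1&2\cos\theta\end{smallmatrix}\bigr)$ has eigenvalues $e^{\pm i\theta}$ with eigenvectors proportional to $K^{*}\ket{g}-e^{\pm i\theta}SK^{*}\ket{g}$, and the Gram computation $\|K^{*}\ket{g}-e^{\pm i\theta}SK^{*}\ket{g}\|^{2}=2\sin^{2}\theta\,\|g\|^{2}$ delivers both the nonvanishing of $\ket{\psi_{\pm\theta}}$ and the normalization $\|\psi_{\pm\theta}\|=\|g\|$. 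The one ingredient you defer to \cite{HKSS14} --- that the discriminant $J=D_{\pi}^{1/2}TD_{\pi}^{-1/2}$ of Section 4 is the compression of $KSK^{*}$ to $\HVI$, including the substochastic behaviour at $X_{0}$ caused by $K$ annihilating the arcs into the leaves --- is correctly identified as the only model-specific point; strictly speaking it belongs to how the proposition is applied (via Proposition \ref{prop: isospec}) rather than to the proposition itself, which is stated for an abstract $\ket{g}\in\mathrm{Ker}(\cos\theta-J)$. No gaps.
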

% 命題\ref{prop: SMT}は，行列$J$の固有値，固有ベクトルから$U$の固有値，固有ベクトルを解析することが可能であるということを表している.
% したがって，命題\ref{prop: isospec}, \ref{prop: SMT}より，量子ウォークの時間発展行列$U$の解析は，ランダムウォークの推移確率行列$T$の解析に帰着されることがわかる.

Let us set 
\begin{align*}
\mathcal{A}_j &=\{ a\in A \;|\; t(a)\in X_j,\;o(a)\in X_j \}, \\
\mathcal{B}_j &=\{ a\in A \;|\; t(a)\in X_{j+1},\;o(a)\in X_j \}, \\
\mathcal{C}_j &=\{ a\in A \;|\; t(a)\in X_{j-1},\;o(a)\in X_j \},
\end{align*}
for $j=0,1,\dots,k$ and 
\[ S_+=\{ a\in A \;|\; t(a)\in X_0,\;o(a)\in X_{-1}\},\;S_-=\{ a\in A \;|\; t(a)\in X_{-1},\;o(a)\in X_{0}\}. \]
Remark that $\mathcal{A}_j$, $\mathcal{B}_j$ and $\mathcal{C}_j$ are regarded as the self loop, the arc to $j+1$ and the arc to $j-1$ from the vertex $j$ in the path with length $k$ whose boundaries are labeled by $0$ and $k$; $S_\pm$ are regarded as the additional incoming and outgoing arcs connected to the boundary $0$ of that path. Thus we set $t(\mathcal{B}_j)=o(\mathcal{C}_{j+1})=j+1$,  $t(\mathcal{A}_j)=o(\mathcal{A}_j)=j$, and $t(S_+)=o(S_-)=0$, $o(S_+)=t(S_-)=-1$.

For $j=0,1,...,k$, we introduce the following normal vectors induced by the above subsets 
\[
A^{inv}=\{ \mathcal{A}_j,\mathcal{B}_j,\mathcal{C}_j,S_\pm,\;|\;j=0,1,\dots,k \}
\] 
as follows:
%\begin{align*}
%  \ket{\mathcal{A}_{j}}=\frac{1}{\sqrt{a_{j}|X_{j}|}}\sum_{\substack{a \in A \\ t(a) \in X_{j} \\ o(a) \in X_{j}}} \ket{a}, \quad
%  \ket{\mathcal{B}_{j}}=\frac{1}{\sqrt{b_{j}|X_{j}|}}\sum_{\substack{a \in A \\ t(a) \in X_{j+1} \\ o(a) \in X_{j}}} \ket{a}, \quad
%  \ket{\mathcal{C}_{j}}=\frac{1}{\sqrt{c_{j}|X_{j}|}}\sum_{\substack{a \in A \\ t(a) \in X_{j-1} \\ o(a) \in X_{j}}} \ket{a}.
%\end{align*}
%\begin{align*}
%  \ket{S_{+}}=\frac{1}{\sqrt{m}}\sum_{\substack{a \in A \\ t(a) \in X_{0} \\ o(a) \in X_{-1}}} \ket{a}, \quad
%  \ket{S_{-}}=\frac{1}{\sqrt{m}}\sum_{\substack{a \in A \\ t(a) \in X_{-1} \\ o(a) \in X_{0}}} \ket{a},
%\end{align*}
\begin{align*}
  \ket{\mathcal{A}_{j}}=\frac{1}{\sqrt{a_{j}|X_{j}|}}\sum_{a\in \mathcal{A}_j} \ket{a}, \quad
  \ket{\mathcal{B}_{j}}=\frac{1}{\sqrt{b_{j}|X_{j}|}}\sum_{a\in \mathcal{B_j}} \ket{a}, \quad
  \ket{\mathcal{C}_{j}}=\frac{1}{\sqrt{c_{j}|X_{j}|}}\sum_{a\in \mathcal{C}_j} \ket{a}.
\end{align*}
\begin{align*}
  \ket{S_{+}}=\frac{1}{\sqrt{m}}\sum_{a\in S_+} \ket{a}, \quad
  \ket{S_{-}}=\frac{1}{\sqrt{m}}\sum_{a\in S_-} \ket{a},
\end{align*}
where we set $\ket{\mathcal{A}_{0}}=\ket{\mathcal{B}_{k}}=\ket{\mathcal{C}_{0}}=0$.
Then, it is immediately to see that they are orthogonal to each other, and $\HAI=\mathrm{Span}\{\ket{\mathcal{A}_{j}}, \ket{\mathcal{B}_{j}}, \ket{\mathcal{C}_{j}}, \ket{S_{\pm}} \,|\, j=0,1,...,k\}$ is invariant under $U$.

Here, for simplicity, we notate
% ここで，$\HVI \cong \mathbb{C}^{\{-1,0,...,k\}}$であるため，
\begin{align*}
  \braket{X_{j}|f}=:f(j),
\end{align*}
for any $f \in \HVI$ and $X_j\in \{X_s;\;|\;s=-1,0,\dots,k\}$.
Likewise, we notate
\begin{align*}
  \braket{\mathcal{W}|\psi}=:\psi(\mathcal{W}) %\quad (\mathcal{W} \in A^{\mathrm{inv}}).
\end{align*}
for any $\psi\in \mathcal{H}^{inv}_A$ and $\mathcal{W}\in A^{inv}$.
%where $\forall \psi \in \HAI$ and $A^{\mathrm{inv}}=\{ \mathcal{A}_{j}, \mathcal{B}_{j}, \mathcal{C}_{j}, S_{\pm} \,|\, j=-1,0,...,k\}$.
% と記述する.ただし，$A^{\mathrm{inv}}=\{ \mathcal{A}_{j}, \mathcal{B}_{j}, \mathcal{C}_{j}, S_{\pm} \,|\, j=-1,0,...,k\}$である.
% For $\mathcal{W} \in A^{\mathrm{inv}}$, we set $o(\mathcal{W})=o(a), t(\mathcal{W})=t(a) \quad (a \in \mathcal{W})$.
% Additionally, $t(a) \in X_{j}$のとき，$t(a)=j$を表すことにし，同様にして$o(a) \in X_{j}$のとき，$o(a)=j$とする.
Let $p(\mathcal{W})$ be the transition probability from $X_{o(\mathcal{W})}$ to $X_{t(\mathcal{W})}$ on $\HVI$, that is,
\begin{equation*}
  p(\mathcal{W})=
  \begin{dcases*}
    p_{j} &: $\mathcal{W}= \mathcal{C}_{j}$ \\
    q_{j} &: $\mathcal{W}= \mathcal{B}_{j}$ \\
    r_{j} &: $\mathcal{W}= \mathcal{A}_{j}$ \\
    x &: $\mathcal{W}= \mathcal{B}_{0}$ \\
  \end{dcases*}
\end{equation*}
Then, we define $M(\mathcal{W})$ as follows
\begin{align*}
  M(\mathcal{W})=\pi_{o(\mathcal{W})}p(\mathcal{W})=\pi_{t(\mathcal{W}^{-1})}p(\mathcal{W}^{-1})
\end{align*}
% \begin{equation*}
%   M(\mathcal{W})=
%   \begin{dcases*}
%     \pi_{j}p_{j} &: $\mathcal{W}= \mathcal{C}_{j}$ \\
%     \pi_{j}q_{j} &: $\mathcal{W}= \mathcal{B}_{j}$ \\
%     \pi_{j}r_{j} &: $\mathcal{W}= \mathcal{A}_{j}$ \\
%     \pi_{-1} &: $\mathcal{W}= S_{\pm}$
%   \end{dcases*}
% \end{equation*}
Note that the second equal sign of above equation is from Eq.\ (\ref{eq: DBC}).
By using $M(\mathcal{W})$, we get the following lemma.
\begin{lemma}
  \label{lem: element}
  Let $\cos \theta \in \mathrm{Spec}(T)$ and $\ket{f} \in \mathrm{Ker}(\cos \theta -T)$ with $\cos \theta \in (-1,1)$.
  Then, $e^{\pm i \theta} \in \mathrm{Spec}(U)$,
  % \begin{align}
  %   \label{eq: element_A}
  %   \braket{\mathcal{A}_{j}|\psi_{\pm \theta}}&=\frac{1-e^{\pm i\theta}}{\sqrt{2}|\sin \theta|}\sqrt{\frac{a_{j}|X_{j}|}{m+|A_{J}|}} \braket{X_{j}|f}, \\
  %   \label{eq: element_B}
  %   \braket{\mathcal{B}_{j}|\psi_{\pm \theta}}&=\frac{1}{\sqrt{2}|\sin \theta|}\sqrt{\frac{b_{j}|X_{j}|}{m+|A_{J}|}} \left(\braket{X_{j}|f}-e^{\pm i \theta} \braket{X_{j+1}|f} \right), \\
  %   \label{eq: element_C}
  %   \braket{\mathcal{C}_{j}|\psi_{\pm \theta}}&=\frac{1}{\sqrt{2}|\sin \theta|}\sqrt{\frac{c_{j}|X_{j}|}{m+|A_{J}|}} \left(\braket{X_{j}|f}-e^{\pm i \theta} \braket{X_{j-1}|f} \right), \\
  %   \label{eq: element_S_plus}
  %   \braket{S_{+}|\psi_{\pm \theta}}&=\frac{1}{\sqrt{2}|\sin \theta|} \sqrt{\frac{m}{m+|A_{J}|}}\left(\braket{X_{-1}|f}-e^{\pm i \theta}\braket{X_{0}|f}\right), \\
  %   \label{eq: element_S_minus}
  %   \braket{S_{-}|\psi_{\pm \theta}}&=\frac{1}{\sqrt{2}|\sin \theta|} \sqrt{\frac{m}{m+|A_{J}|}}\left(\braket{X_{0}|f}-e^{\pm i \theta}\braket{X_{-1}|f}\right).
  % \end{align}
  \begin{align*}
    \psi_{\pm \theta}(\mathcal{W})=\frac{\sqrt{M(\mathcal{W})}}{\sqrt{2}|\sin \theta|}(f(t(\mathcal{W}))-e^{\pm i \theta}f(o(\mathcal{W}))) \in \mathrm{Ker}(e^{\pm i \theta}-U)
  \end{align*}
\end{lemma}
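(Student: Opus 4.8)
The plan is to recognize the statement as the concrete form, on the invariant subspace $\HAI$, of Proposition \ref{prop: SMT}, once the eigenvector of $T$ has been transported to one of $J$. First I would pass from $T$ to $J$: from $J=D_\pi^{1/2}TD_\pi^{-1/2}$ and $T\ket{f}=\cos\theta\,\ket{f}$ it follows that $\ket{g}:=D_\pi^{1/2}\ket{f}$ is a nonzero eigenvector of $J$ with eigenvalue $\cos\theta$ (cf.\ Proposition \ref{prop: isospec}), and in particular $\cos\theta\in\mathrm{Spec}(J)$. Since $\cos\theta\in(-1,1)$, Proposition \ref{prop: SMT} applies with this $\ket{g}$ and produces $e^{\pm i\theta}\in\mathrm{Spec}(U)$ together with
\[
\ket{\psi_{\pm\theta}}=\frac{1}{\sqrt{2}\,|\sin\theta|}\bigl(K^*-e^{\pm i\theta}SK^*\bigr)\ket{g}\in\mathrm{Ker}(e^{\pm i\theta}-U).
\]
It then remains to compute $\psi_{\pm\theta}(\mathcal W)=\braket{\mathcal W|\psi_{\pm\theta}}$ for each $\mathcal W\in A^{inv}$; since $S$ is a real symmetric involution and $K$ has real entries, this equals $\tfrac{1}{\sqrt{2}\,|\sin\theta|}\bigl(\braket{K\mathcal W|g}-e^{\pm i\theta}\braket{KS\mathcal W|g}\bigr)$, so the problem reduces to evaluating $K\ket{\mathcal W}$ and $KS\ket{\mathcal W}$ in $\HVI$.

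Next I would work out the action of $S$ and $K$ on the basis of $\HAI$. From $S\ket{a}=\ket{a^{-1}}$ and the balance identity $b_j|X_j|=c_{j+1}|X_{j+1}|$ of Eq.\ (\ref{eq: dist}), a direct check shows that $S$ fixes each $\ket{\mathcal A_j}$, interchanges $\ket{\mathcal B_j}$ with $\ket{\mathcal C_{j+1}}$, and interchanges $\ket{S_+}$ with $\ket{S_-}$. For $K$, its definition --- in particular that it annihilates every arc terminating at a leaf --- together with the counting of arcs between consecutive layers $X_j$ shows that $K\ket{\mathcal W}$ is proportional to $\ket{X_{t(\mathcal W)}}$; computing the proportionality constant and simplifying with Eq.\ (\ref{eq: dist}) yields $K\ket{\mathcal W}=\sqrt{M(\mathcal W)/\pi_{t(\mathcal W)}}\,\ket{X_{t(\mathcal W)}}$, where one uses the form $M(\mathcal W)=\pi_{t(\mathcal W)}\,p(\mathcal W^{-1})$ (and $K\ket{S_-}=0$). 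Running the same computation for $\mathcal W^{-1}$ and using $KS\ket{\mathcal W}=K\ket{\mathcal W^{-1}}$ gives $KS\ket{\mathcal W}=\sqrt{M(\mathcal W)/\pi_{o(\mathcal W)}}\,\ket{X_{o(\mathcal W)}}$, this time through $M(\mathcal W)=\pi_{o(\mathcal W)}\,p(\mathcal W)$.

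Finally I would assemble the pieces. Since $\ket{g}=D_\pi^{1/2}\ket{f}$ and $D_\pi^{1/2}$ is self-adjoint, $\braket{X_j|g}=\sqrt{\pi_j}\,\braket{X_j|f}=\sqrt{\pi_j}\,f(j)$, so the previous step gives $\braket{K\mathcal W|g}=\sqrt{M(\mathcal W)}\,f(t(\mathcal W))$ and $\braket{KS\mathcal W|g}=\sqrt{M(\mathcal W)}\,f(o(\mathcal W))$. Inserting these into the expression for $\psi_{\pm\theta}(\mathcal W)$ from the first step yields exactly $\tfrac{\sqrt{M(\mathcal W)}}{\sqrt{2}\,|\sin\theta|}\bigl(f(t(\mathcal W))-e^{\pm i\theta}f(o(\mathcal W))\bigr)$, which is the asserted formula.

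The conceptual content sits entirely in Propositions \ref{prop: isospec} and \ref{prop: SMT}; the main obstacle is the bookkeeping in the second step, where one must correctly count the arcs running between the layers $X_j$, handle the irregular vertex $v^*$ (degree $d+m$) and the arcs $S_\pm$ incident to it as separate cases, and check that all the normalization constants collapse --- through Eqs.\ (\ref{eq: dist}) and (\ref{eq: DBC}) and the two expressions $M(\mathcal W)=\pi_{o(\mathcal W)}p(\mathcal W)=\pi_{t(\mathcal W)}p(\mathcal W^{-1})$ for $M(\mathcal W)$ --- into the single factor $\sqrt{M(\mathcal W)}$.
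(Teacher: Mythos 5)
Your proposal is correct and follows essentially the same route as the paper: both reduce the claim to Proposition \ref{prop: SMT} applied to $\ket{g}=D_{\pi}^{1/2}\ket{f}$ and then evaluate $\bra{\mathcal W}K^{*}\ket{g}$ and $\bra{\mathcal W}SK^{*}\ket{g}$ on the invariant arc basis, collapsing the normalizations into $\sqrt{M(\mathcal W)}$ via Eqs.\ (\ref{eq: dist}) and (\ref{eq: DBC}). Your version is if anything slightly more explicit than the paper's (spelling out the action of $S$ on $A^{inv}$ and the special cases $S_{\pm}$), but the decomposition and the key identities are identical.
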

for $\mathcal{W} \in A^{\mathrm{inv}}$.
\begin{proof}
  Combining Proposition \ref{prop: isospec} with \ref{prop: SMT},
  we get $e^{\pm i \theta} \in \mathrm{Spec}(U)$ immediately.
  % 行列$J$の固有値$\cos \theta$に対する固有ベクトルを$\ket{f}$とする.
  % このとき，命題\ref{prop: isospec}より$\ket{f}=D_{\pi}^{1/2}\ket{g}$と変形することができる.
  % したがって，命題\ref{prop: SMT}より$U$の固有値$e ^{\pm i \theta}$に対する固有ベクトルは
  % \begin{align*}
  %   \ket{\psi_{\pm \theta}}
  %   &=\frac{1}{\sqrt{2}|\sin \theta|}\left(d^{*}-e^{\pm i \theta}Sd^{*}\right)\ket{f} \\
  %   &=\frac{1}{\sqrt{2}|\sin \theta|}\left(d^{*}-e^{\pm i \theta}Sd^{*}\right)\ket{f} \\
  % \end{align*}
   Let $\cos \theta \in \mathrm{Spec}(T)$ and $\ket{f} \in \mathrm{Ker}(\cos \theta -T)$ with $\cos \theta \in (-1,1)$.
  % このとき，命題\ref{prop: isospec}より，$j=-1,0,...,k$に対して次の式が成立する.
  % \begin{align}
  %   \label{eq: vec_g}
  %   % \ket{g}=D_{\pi}^{1/2}\ket{f}
  %   g(j)=\pi_{j}f(j)
  % \end{align}
  From the Proposition \ref{prop: SMT}, it follows that
  \begin{align}
    \label{eq: elem_W}
    \psi_{\pm \theta}(\mathcal{W})
    &=\frac{1}{\sqrt{2}|\sin \theta|}\left(\bra{\mathcal{W}}K^{*}\ket{g}-e^{\pm i \theta}\bra{\mathcal{W}}SK^{*}\ket{g}\right) 
  \end{align}
  % \begin{align*}
  %   \psi_{\pm \theta}(a)
  %   &=\frac{1}{\sqrt{2}\sin \theta}\left(d^{*}-e^{\pm i \theta}Sd^{*}\right)g(t(a)) \\
  %   &=\frac{1}{\sqrt{2}\sin \theta}\left(\sqrt{p(a^{-1})}g(t(a))-e^{\pm i \theta}\sqrt{p(a)}g(o(a))\right) \\
  %   &=\frac{1}{\sqrt{2}\sin \theta}\left(\sqrt{p(a^{-1})} \sqrt{\pi(t(a))}f(t(a)) -e^{\pm i \theta}\sqrt{p(a)}\sqrt{\pi(o(a))}f(o(a))\right) \\
  %   &=\frac{\sqrt{m(|a_{J}|)}}{\sqrt{2}\sin \theta}\left(f(t(a)) -e^{\pm i \theta}f(o(a))\right) \\
  %   &=\frac{\sqrt{m(|a_{J}|)}(1-e^{\pm i \theta})}{\sqrt{2}\sin \theta}.
  % \end{align*}
  for $\mathcal{W} \in A^{\mathrm{inv}}$.
  Focusing on the first term of Eq.\ (\ref{eq: elem_W}), we have
  \begin{align}
    \label{eq: first_term}
    \bra{\mathcal{W}}K^{*}\ket{g}
    &=\frac{1}{\sqrt{|\mathcal{W}|}}\sum_{a \in \mathcal{W}}\bra{a}K^{*}\ket{g} \notag \\
    &=\frac{1}{\sqrt{|\mathcal{W}|}}\sum_{a \in \mathcal{W}}\frac{1}{\sqrt{\mathrm{deg}(t(a))}}\braket{t(a)|g} \notag \\
    &=\frac{1}{\sqrt{|\mathcal{W}|}}\frac{1}{\sqrt{\mathrm{deg}(t(\mathcal{W}))}}\sum_{a \in \mathcal{W}}\braket{t(a)|g} \notag \\
    &=\frac{1}{\sqrt{|\mathcal{W}|\mathrm{deg}(t(\mathcal{W}))}}\frac{|\mathcal{W}|}{\sqrt{|X_{t(\mathcal{W})}|}}\braket{X_{t(\mathcal{W})}|g} \notag \\
    &=\sqrt{\frac{|\mathcal{W}|}{\mathrm{deg}(t(\mathcal{W}))|X_{t(\mathcal{W})}|}} g(t(\mathcal{W})) \notag \\
    &=\sqrt{p(\mathcal{W}^{-1})}g(t(\mathcal{W}))
  \end{align}
  Likewise, the second term of Eq.\ (\ref{eq: elem_W}) is written as 
  \begin{align}
    \label{eq: second_term}
    \bra{\mathcal{W}}SK^{*}\ket{g}
    =\sqrt{p(\mathcal{W})}g(o(\mathcal{W})).
  \end{align}
  By Eqs.\ (\ref{eq: elem_W}), (\ref{eq: first_term}) and (\ref{eq: second_term}), we get
  \begin{align*}
    % \label{eq: vec_D}
    % \ket{\psi_{\pm \theta}}
    % &=\frac{1}{\sqrt{2}|\sin \theta|}\left(d^{*}-e^{\pm i \theta}Sd^{*}\right)\ket{g} \notag \\
    % &=\frac{1}{\sqrt{2}|\sin \theta|}\left(d^{*}D_{\pi}^{1/2}\ket{f}-e^{\pm i \theta}Sd^{*}D_{\pi}^{1/2}\ket{f}\right).
    \psi_{\pm \theta}(\mathcal{W})
    &=\frac{1}{\sqrt{2}|\sin \theta|}\left(\sqrt{p(\mathcal{W}^{-1})\pi_{t(\mathcal{W})}}g(t(a))-e^{\pm i \theta} \sqrt{p(\mathcal{W})\pi_{o(\mathcal{W})}}g(o(a))\right) \\
    &=\frac{\sqrt{M(\mathcal{W})}}{\sqrt{2}|\sin \theta|}(f(t(\mathcal{W}))-e^{\pm i \theta}f(o(\mathcal{W})))
  \end{align*}
  Thus, we get the desired equation.
\end{proof}
% 補題\ref{lem: element}より，$U$の固有値，固有ベクトルの解析は，$T$の固有値，固有ベクトルの解析に帰着される.
% \subsection{Analysis of $T$}
% \label{subsec_RW}
% 本節では，加藤の摂動論\cite{Ka82}を援用して，ランダムウォークの推移確率行列$T$の固有値解析を行う.
Let us show the spectral analysis of $T$. 
Recall $T$ is described by
\begin{align*}
  T=
  \begin{pmatrix}
    0 & x & 0 & 0 & \dots & 0 \\ 
    p_{1} & r_{1} & q_{1} & 0 & \dots & 0 \\
    0 & p_{2} & r_{2} & q_{2} & \dots & 0 \\
    \vdots & \ddots & \ddots & \ddots & \ddots & \vdots \\
    \vdots & \ddots & \ddots & p_{k-1} & r_{k-1} & q_{k-1} \\
    0 & \dots & \dots & \dots & p_{k} & r_{k} \\
  \end{pmatrix}.
\end{align*}
We set $\epsilon=1/d$. Note that $\epsilon$ is sufficiently small when $n$ is large.
By using $\epsilon$, $p_{j}, q_{j}$ and $r_{j}$ are rewritten as 
\begin{align*}
  p_{j}=\epsilon \cdot j^{2}, \quad 
  q_{j}=1-\frac{j}{k}+\epsilon \cdot j(j-k), \quad
  r_{j}=\frac{j}{k}+\epsilon \cdot j(k-2j),
\end{align*}
respectively.
Likewise, $x$ is expanded as
\begin{align*}
  x=\frac{d}{d+m}=1-\epsilon \cdot m+ \epsilon^{2} \cdot m^{2} - \cdots 
\end{align*}
for $m \ll d$.
Thus, $T$ can be expressed as
\begin{align}
  \label{eq: expand of T}
  T=T(\epsilon)=T^{(0)}+\epsilon T^{(1)}+ \epsilon^{2} T^{(2)}+ \cdots.
\end{align}
% 特に，$T^{(0)}$は非摂動行列と呼ぶ.
% ここで，各行列の$(j, \ell)$成分は，$j,\ell=0,1,...,k$に対して
\begin{align}
  % \label{eq: T0}
  (T^{(0)})_{j, \ell}&=
  \begin{cases*}
    j/k &: $j=\ell$, \\
    1-j/k &: $j=\ell+1$, \\
    0 &: otherwise, 
  \end{cases*}
  \notag
  \\
  \label{eq: T1}
  (T^{(1)})_{j, \ell}&=
  \begin{cases*}
    -m &: $(j,\ell)=(0,1)$ \\
    j(k-2j) &: $j=\ell$, \\
    j(j-k) &: $j=\ell+1$ and $(j,\ell)\ne(0,1)$, \\
    j^{2} &: $j=\ell-1$, \\
    0 &: otherwise.
  \end{cases*}
  \\
  \label{eq: T23}
  (T^{(q)})_{j, \ell}&=
  \begin{cases*}
    (-m)^{q} &: $(j, \ell)=(0,1)$, \\
    0 &: otherwise.
  \end{cases*}
  % \begin{pmatrix}
  %   0 & 1 & 0 & 0 \\
  %   0 & 1/k & 1-1/k & 0 \\
  %   0 & 0 & 2/k & 1-2/k \\
  %   \vdots & \ddots & \ddots \ddots \\
  % \end{pmatrix}
\end{align}
for $j,\ell=0,1,...,k$ and $q=2,3,4,...$.
Note that $T^{(0)}$ is called non-perturbed matrix.
Then, the eigenvalues and eigenvectors of $T^{(0)}$ are given by simple calculation.
\begin{lemma}
  \label{lem: e.val of T}
  The set of the eigenvalues of $T^{(0)}$ is 
  \begin{align*}
    \mathrm{Spec}(T^{(0)})=\left\{\frac{j}{k} \, \middle| \, j=0,1,...,k \right\}.
  \end{align*}
  The right eigenvector associated to $j/k$, $\ket{u^{(j)}}$, is given by
  \begin{equation*}
    u^{(j)}(r)=
    \begin{cases*}
      \left. \dbinom{j}{r} \middle/ \dbinom{k}{r} \right. &: $r=0,1,...,j$, \\
      0 &: $r=j+1,...,k$,
    \end{cases*}
  \end{equation*}
  The left eigenvector associated to $j/k$, $\bra{v^{(j)}}$, is given by
  \begin{equation*}
    v^{(j)}(r)=
    \begin{cases*}
      (-1)^{(r-j)}\dbinom{k-j}{r-j} &: $r=j,j+1,...,k$, \\
      0 &: $r=0,1,...,j-1$. \\
    \end{cases*}
  \end{equation*}
\end{lemma}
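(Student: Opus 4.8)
\textbf{Proof plan for Lemma \ref{lem: e.val of T}.}

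The matrix $T^{(0)}$ is lower bidiagonal: its only nonzero entries are $(T^{(0)})_{j,j}=j/k$ on the diagonal and $(T^{(0)})_{j,j-1}=1-(j-1)/k$ on the subdiagonal (reading off \eqref{eq: T1} with the convention that indices run $0,\dots,k$). Hence the characteristic polynomial is $\prod_{j=0}^{k}(\lambda - j/k)$ and the spectrum is exactly $\{j/k \mid j=0,\dots,k\}$, each eigenvalue simple. This settles the first assertion with no work beyond reading off the triangular structure. Because the eigenvalues are simple, each eigenspace (left and right) is one-dimensional, so it suffices to exhibit one nonzero vector in each and check the eigenvalue equation.

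For the right eigenvector $\ket{u^{(j)}}$ associated to $j/k$, I would simply verify $T^{(0)}\ket{u^{(j)}}=(j/k)\ket{u^{(j)}}$ componentwise. Writing out row $r$ of $T^{(0)}$ gives the scalar identity
\begin{align*}
  \frac{r}{k}\,u^{(j)}(r) + \left(1-\frac{r-1}{k}\right) u^{(j)}(r-1) = \frac{j}{k}\,u^{(j)}(r),
\end{align*}
wait --- since $T^{(0)}$ is lower triangular the row-$r$ equation only involves $u^{(j)}(r)$ and $u^{(j)}(r-1)$; rearranging, the claim $u^{(j)}(r)=\binom{j}{r}/\binom{k}{r}$ for $r\le j$ and $0$ otherwise reduces to the elementary binomial recursion $\binom{j}{r}/\binom{k}{r} \cdot (r/k) + \binom{j}{r-1}/\binom{k}{r-1}\cdot((k-r+1)/k)$ equalling $(j/k)\binom{j}{r}/\binom{k}{r}$, which is a one-line Pascal-type computation; the truncation at $r=j$ is automatic because $\binom{j}{r}=0$ for $r>j$. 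Similarly, for the left eigenvector $\bra{v^{(j)}}$ one checks $\bra{v^{(j)}}T^{(0)}=(j/k)\bra{v^{(j)}}$, i.e. for each column $r$ the identity
\begin{align*}
  \frac{r}{k}\,v^{(j)}(r) + \left(1-\frac{r}{k}\right) v^{(j)}(r+1) = \frac{j}{k}\,v^{(j)}(r),
\end{align*}
and substituting $v^{(j)}(r)=(-1)^{r-j}\binom{k-j}{r-j}$ for $r\ge j$ (and $0$ below) turns this into the alternating-sign identity $(r/k)(-1)^{r-j}\binom{k-j}{r-j} + ((k-r)/k)(-1)^{r+1-j}\binom{k-j}{r+1-j} = (j/k)(-1)^{r-j}\binom{k-j}{r-j}$, again a routine Pascal manipulation, with the vanishing below $j$ matching the binomial convention $\binom{k-j}{r-j}=0$ for $r<j$.

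There is no real obstacle here: the lemma is a "simple calculation" exactly as the text claims, and the only thing to be a little careful about is bookkeeping the index conventions (rows/columns labeled from $0$ to $k$, and the off-by-one in the subdiagonal coefficient $1-(r-1)/k$ versus $1-r/k$ depending on whether one reads a row or a column). If one wants a slicker presentation, one can note that $T^{(0)}$ is conjugate to the diagonal matrix $\mathrm{diag}(0,1/k,\dots,k/k)$ via the lower-triangular "binomial" matrix whose columns are the $u^{(j)}$'s, and that the left eigenvectors are then the rows of its inverse; the stated formulas for $u^{(j)}$ and $v^{(j)}$ are precisely such a pair, and biorthogonality $\bra{v^{(i)}}u^{(j)}\rangle = \delta_{ij}$ can be recorded as a sanity check via the Vandermonde-type sum $\sum_{r} (-1)^{r-i}\binom{k-i}{r-i}\binom{j}{r}/\binom{k}{r}$. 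I would present the direct componentwise verification as the main proof and relegate the conjugation/biorthogonality remark to a sentence, since the componentwise check is self-contained and needs nothing beyond Pascal's rule.
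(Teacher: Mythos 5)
Your overall strategy is the right one (and is evidently what the authors mean by ``simple calculation'': the paper states this lemma without proof), but there is a concrete error that breaks both of your displayed verification identities: you have the orientation of $T^{(0)}$ backwards. Reading off the displayed matrix $T$, the superdiagonal entries are $q_j=1-j/k+O(\epsilon)$ and the subdiagonal entries are $p_j=\epsilon j^2=O(\epsilon)$, so $T^{(0)}$ is \emph{upper} bidiagonal, with $(T^{(0)})_{j,j}=j/k$ and $(T^{(0)})_{j,j+1}=1-j/k$; this is confirmed later in the paper, where $T^{(0)}$ is listed among the upper triangular matrices. (The case labels in the paper's formula for $(T^{(0)})_{j,\ell}$ do contain an index typo --- ``$j=\ell+1$'' should be ``$\ell=j+1$'' --- which is presumably what misled you, but the stated eigenvectors are only consistent with the upper-bidiagonal reading.) Consequently the row-$r$ equation for the right eigenvector couples $u^{(j)}(r)$ with $u^{(j)}(r+1)$, not with $u^{(j)}(r-1)$, and the column-$r$ equation for the left eigenvector couples $v^{(j)}(r)$ with $v^{(j)}(r-1)$, not with $v^{(j)}(r+1)$. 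The identities you wrote are false as stated: your right-eigenvector identity at $k=2$, $j=1$, $r=1$ gives $\tfrac14+1=\tfrac14$, and your left-eigenvector identity fails already at $k=2$, $j=0$, $r=1$. With your lower-bidiagonal reading the eigenvector for $j/k$ would in fact be supported on indices $r\ge j$, contradicting the lemma.

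The repaired componentwise checks are
\begin{align*}
\frac{r}{k}\,u^{(j)}(r)+\Bigl(1-\frac{r}{k}\Bigr)u^{(j)}(r+1)=\frac{j}{k}\,u^{(j)}(r),\qquad
\frac{r}{k}\,v^{(j)}(r)+\Bigl(1-\frac{r-1}{k}\Bigr)v^{(j)}(r-1)=\frac{j}{k}\,v^{(j)}(r),
\end{align*}
and both do reduce to one-line binomial manipulations: the first to $\frac{k-r}{k}\cdot\binom{j}{r+1}\big/\binom{k}{r+1}=\frac{j-r}{k}\cdot\binom{j}{r}\big/\binom{k}{r}$ and the second to $\frac{r-j}{k}\binom{k-j}{r-j}=\frac{k-r+1}{k}\binom{k-j}{r-j-1}$, with the support conditions handled by the convention $\binom{a}{b}=0$ for $b<0$ or $b>a$. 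Everything else in your write-up --- reading the spectrum off the diagonal of a triangular matrix, simplicity of the eigenvalues, and the optional diagonalization/biorthogonality remark --- is fine once the transposition is corrected.
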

We should remark that the right eigenvector $\ket{u}$  and left eigenvector $\bra{v}$ corresponding to the eigenvalue 1 are written as
% 特に，固有値1の右固有ベクトル$\ket{u} \coloneq \ket{u^{(k)}}$と左固有ベクトル$\bra{v} \coloneq \bra{v^{(k)}}$はそれぞれ次のようになることに注意する.
\begin{align*}
  \ket{u}=[1,1,...,1]^{\top}, \\
  \bra{v}=[0,0,...,0,1],
\end{align*}
respectively.
% 補題\ref{lem: e.val of T}の固有値1に関するものについては，証明において重要な役割を果たすため，改めてまとめておく.
% \begin{corollary}
%   $T$の固有値1に対する右固有ベクトル$\ket{u} \coloneq \ket{u^{(k)}}$は，
%   \begin{align*}
%     \ket{u}=[1,1,...,1]^{\top},
%   \end{align*}
%   となり，固有値1に対する左固有ベクトル$\bra{v} \coloneq \bra{v^{(k)}}$は，
%   \begin{align*}
%     \bra{v}=[0,0,...,0,1],
%   \end{align*}
%   となる.
% \end{corollary}
We show the following proposition, which is the main tool in this proof. 
% ここで，本研究において重要な役割を果たす加藤の摂動論\cite{Ka82}における主張の1つを紹介する.
\begin{prop}
  \label{prop: perturbation}
  {\rm (\cite{Ka82})}

  When $T(\epsilon)$ is expanded as Eq.\ (\ref{eq: expand of T}), the eigenvalue $\lambda(\epsilon)$ of $T(\epsilon)$ induced by the eigenvalue 1 of $T^{(0)}$ is given by
  % $T(\epsilon)$が式$(\ref{eq: expand of T})$の形で展開されているとき，非摂動行列$T^{(0)}$の固有値$1$から誘導される$T(\epsilon)$の固有値$\lambda(\epsilon)$は次のように展開される.
  \begin{align*}
    \lambda(\epsilon)=1+\epsilon \lambda^{(1)}+\epsilon \lambda^{(2)}+ \cdots,
  \end{align*}
  where
  \begin{align}
    \label{eq: lambda}
    &\lambda^{(n)}=\sum_{p=1}^{n}\frac{(-1)^{p}}{p} \sum_{\substack{\nu_{1}+\cdots + \nu_{p}=n \\ \omega_{1}+\cdots + \omega_{p}=p-1}} \mathrm{Tr}\left(T^{(\nu_{1})}S^{(\omega_{1})}\cdots T^{(\nu_{p})}S^{(\omega_{p})}\right), \\
    \label{eq: Sn}
    &S^{(0)}=-P=-\ket{u}\bra{v},\quad  S^{(n)}=S^{n}, \quad S^{(-n)}=O \quad (n\geq 1), \\
    \label{eq: def_S}
    &S=-\sum_{j=0}^{k-1}\frac{1}{1-j/k}\dbinom{k}{j} \ket{u^{(j)}}\bra{v^{(j)}}.
  \end{align}
  Moreover, the right eigenvector $\ket{u(\epsilon)}$ associated to $\lambda(\epsilon)$ is given by
  % 固有値$\lambda(\epsilon)$に対応する右固有ベクトル$\ket{u(\epsilon)}$は次のようになる.
  \begin{align}
    \label{eq: perturbed_vec}
    \ket{u(\epsilon)}=\ket{u}+O(\epsilon).
  \end{align} 
\end{prop}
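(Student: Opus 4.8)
The plan is to read Proposition~\ref{prop: perturbation} as a direct instance of Kato's analytic perturbation theory for an isolated, algebraically simple eigenvalue (\cite{Ka82}, Ch.~II). Concretely, the only things that need checking are that $T(\epsilon)$ is an analytic family near $\epsilon=0$, that $1$ is a simple isolated eigenvalue of $T^{(0)}$, and that the reduced resolvent of $T^{(0)}$ at $1$ has the closed form appearing in (\ref{eq: def_S}); with these in hand, the coefficient formula (\ref{eq: lambda}), the definitions (\ref{eq: Sn}), and the eigenvector estimate (\ref{eq: perturbed_vec}) are exactly Kato's general statements specialized to our matrix.

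First I would establish analyticity. Substituting $d=1/\epsilon$ into the entries of $T$ shows that $p_{j},q_{j},r_{j}$ are polynomials in $\epsilon$ of degree at most one, while the single remaining entry $x=d/(d+m)=1/(1+m\epsilon)$ is analytic at $\epsilon=0$ with the stated geometric expansion; hence $T(\epsilon)$ extends to a matrix-valued holomorphic function on the disc $|\epsilon|<1/m$, whose Taylor coefficients are exactly $T^{(0)},T^{(1)},T^{(q)}$ in (\ref{eq: T1})--(\ref{eq: T23}). Next I would record, from Lemma~\ref{lem: e.val of T}, that the spectrum of the $(k+1)\times(k+1)$ matrix $T^{(0)}$ consists of the $k+1$ pairwise distinct numbers $\{\,j/k:0\le j\le k\,\}$; consequently every eigenvalue of $T^{(0)}$ is algebraically simple and isolated. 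In particular $1=k/k$ is simple, with right eigenvector $\ket{u}=[1,\dots,1]^{\top}$ and left eigenvector $\bra{v}=[0,\dots,0,1]$ (as noted after Lemma~\ref{lem: e.val of T}); these satisfy $\braket{v|u}=1$, so the associated (oblique) eigenprojection is $P=\ket{u}\bra{v}$, with $P^{2}=P$.

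Given these two facts, Kato's theorem yields a single analytic eigenvalue branch $\lambda(\epsilon)=1+\sum_{n\ge1}\epsilon^{n}\lambda^{(n)}$ of $T(\epsilon)$ induced by the eigenvalue $1$ of $T^{(0)}$, an analytic eigenprojection $P(\epsilon)=P+O(\epsilon)$, and the closed expression (\ref{eq: lambda}) for $\lambda^{(n)}$ in terms of the $T^{(\nu)}$ and the operators $S^{(\omega)}$ of (\ref{eq: Sn}); the estimate $\ket{u(\epsilon)}=\ket{u}+O(\epsilon)$ in (\ref{eq: perturbed_vec}) follows by choosing $\ket{u(\epsilon)}$ proportional to $P(\epsilon)\ket{u}$ and using continuity of $P(\epsilon)$ at $\epsilon=0$. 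It remains to identify the reduced resolvent $S$. By definition $S=\sum_{j=0}^{k-1}(j/k-1)^{-1}P_{j}$, where $P_{j}$ is the eigenprojection of $T^{(0)}$ onto the $(j/k)$-eigenspace; writing $P_{j}=\ket{u^{(j)}}\bra{v^{(j)}}/\braket{v^{(j)}|u^{(j)}}$ and using that, by Lemma~\ref{lem: e.val of T}, $u^{(j)}$ is supported on the coordinates $0,\dots,j$ and $v^{(j)}$ on $j,\dots,k$, the overlap reduces to the single coordinate $r=j$, so that $\braket{v^{(j)}|u^{(j)}}=u^{(j)}(j)\,v^{(j)}(j)=1/\binom{k}{j}$. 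Substituting $P_{j}=\binom{k}{j}\ket{u^{(j)}}\bra{v^{(j)}}$ and $1/(j/k-1)=-1/(1-j/k)$ reproduces (\ref{eq: def_S}); likewise $S^{(0)}=-P=-\ket{u}\bra{v}$ is just the $j=k$ eigenprojection (needing no normalization since $\braket{v|u}=1$) carrying Kato's minus sign.

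The hard analytic content is entirely in \cite{Ka82}, so I do not expect a genuine obstacle; the delicate points are bookkeeping ones: computing the biorthogonality constants $\braket{v^{(j)}|u^{(j)}}=1/\binom{k}{j}$ correctly so that $S$ matches (\ref{eq: def_S}) exactly, keeping the conventions $S^{(0)}=-P$ and $S^{(-n)}=O$ straight, and confirming that Kato's general coefficient formula, which carries an overall factor $1/m$ for an eigenvalue group of total multiplicity $m$, collapses to (\ref{eq: lambda}) in the present simple case $m=1$. As a low-order consistency check one may evaluate the $p=1$ term, $\lambda^{(1)}=\braket{v|T^{(1)}|u}$, which equals the $k$-th row sum of $T^{(1)}$ and therefore vanishes, since every $k$-th row of $T(\epsilon)$ sums to $1$ identically in $\epsilon$ (here $b_{k}=(k-k)(n-k-k)=0$); this is consistent with the leading pulsation gap $1-\lambda(\epsilon)$ being of higher order in $\epsilon$.
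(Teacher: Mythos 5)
Your proposal is correct and follows the same route as the paper, which states this proposition without proof as a direct citation of Kato's finite-dimensional perturbation theory; you simply make explicit the hypotheses the paper leaves implicit (analyticity of $T(\epsilon)$ on $|\epsilon|<1/m$, simplicity of the eigenvalue $1$ of $T^{(0)}$ from Lemma~\ref{lem: e.val of T}, and the normalization $\braket{v^{(j)}|u^{(j)}}=1/\binom{k}{j}$ that yields the reduced resolvent in Eq.~(\ref{eq: def_S})). These verifications, together with your observation that the $1/m$ factor in Kato's general formula collapses for a simple eigenvalue and your $p=1$ consistency check $\lambda^{(1)}=\bra{v}T^{(1)}\ket{u}=0$, are all accurate and would be a welcome addition to the text.
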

Note that $\omega_{1},...,\omega_{p}$ are the integers while $\nu_{1},...,\nu_{p}$ are the positive integers including 0.
Proposition \ref{prop: perturbation} implies that the eigenvalue of $T(\epsilon)$ is obtained by using $T$ and $S$ for sufficiently small $\epsilon$.
Additionally, the corresponding eigenvector $\ket{u(\epsilon)}$ can be approximated by non-perturbed eigenvector $\ket{u}$.
% 命題\ref{prop: perturbation}を用いることによって，$T(\epsilon)$の固有値の1つである$\lambda(\epsilon)$を得ることができる.
% また，固有ベクトルは$\epsilon$が十分小さいときは，$\ket{u}$と非摂動行列における固有ベクトルで近似することができることがわかる.
% 以下で，$\lambda(\epsilon)$を具体的に計算していく.
\begin{lemma}
  The eigenvalue $\lambda(\epsilon)$ of $T(\epsilon)$ induced by the eigenvalue 1 of $T^{(0)}$ is obtained by
  % 非摂動行列$T^{(0)}$の固有値$1$から誘導される$T(\epsilon)$の固有値$\lambda(\epsilon)$は，$\epsilon$が十分小さいとき，次のような漸近展開で表現される.
  \begin{align}
    \label{eq: lambda expand}
    \lambda(\epsilon)=1-m \cdot k!\cdot k^{k}\cdot \epsilon^{k+1}+O(\epsilon^{k+2}).
  \end{align}
\end{lemma}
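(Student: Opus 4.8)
The plan is to substitute the expansion $T(\epsilon)=T^{(0)}+\epsilon T^{(1)}+\epsilon^{2}T^{(2)}+\cdots$ from \eqref{eq: expand of T} into Kato's trace formula (Proposition~\ref{prop: perturbation}, equations \eqref{eq: lambda}--\eqref{eq: def_S}) and to show that the coefficient $\lambda^{(n)}$ of the branch through the simple eigenvalue $1$ vanishes for $1\le n\le k$ while $\lambda^{(k+1)}=-m\,k!\,k^{k}$; feeding this into $\lambda(\epsilon)=1+\sum_{n\ge1}\epsilon^{n}\lambda^{(n)}$ then yields \eqref{eq: lambda expand}. Throughout I write $P=\ket{u}\bra{v}$ with $\ket{u}=[1,\dots,1]^{\top}$, $\bra{v}=[0,\dots,0,1]$ (so $\braket{v|u}=1$), and let $\ket{e_{0}},\dots,\ket{e_{k}}$ be the standard basis in which the $(k+1)\times(k+1)$ matrices $T$, $T^{(q)}$, $S$, $P$ are written. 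The whole argument rests on three elementary facts, which I would establish first. (a) Since $T^{(q)}=(-m)^{q}\ket{e_{0}}\bra{e_{1}}$ for $q\ge2$ by \eqref{eq: T23}, and $T^{(1)}$ has every row but the $0$th summing to $0$ with $(T^{(1)})_{0,1}=-m$, one obtains $T^{(\nu)}\ket{u}=(-m)^{\nu}\ket{e_{0}}$ for every $\nu\ge1$, $\bra{v}T^{(\nu)}=0$ for $\nu\ge2$, and $\bra{v}T^{(1)}=k^{2}(\bra{e_{k-1}}-\bra{e_{k}})$; in particular $\lambda^{(1)}=\bra{v}T^{(1)}\ket{u}=0$. (b) From \eqref{eq: def_S}, Lemma~\ref{lem: e.val of T}, and biorthogonality of $\{\ket{u^{(j)}}\}$, $\{\bra{v^{(j)}}\}$ one checks $\ket{e_{0}}=\ket{u^{(0)}}$, $S\ket{e_{0}}=-\ket{e_{0}}$, $\bra{v}S=S\ket{u}=0$, $(-P)\ket{e_{0}}=0$, and $(S)_{j,j}=-k/(k-j)$ for $0\le j\le k-1$. (c) Among $S^{(\omega)}$ $(\omega\ge1)$, $T^{(\nu)}$ $(\nu\ge2)$, and $-P$, every entry $(M)_{i,j}$ with $i>j$ vanishes, while $T^{(1)}$ is tridiagonal with subdiagonal entry $(T^{(1)})_{j+1,j}=(j+1)^{2}$ (the $\epsilon$-coefficient of $p_{j+1}$); informally, \emph{only $T^{(1)}$ raises the level, and only by one}.

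With (a)--(c), I would next prove the vanishing of the low-order coefficients. Fix a term $\tfrac{(-1)^{p}}{p}\,\mathrm{Tr}\!\big(T^{(\nu_{1})}S^{(\omega_{1})}\cdots T^{(\nu_{p})}S^{(\omega_{p})}\big)$ of $\lambda^{(n)}$. Since $\omega_{1}+\cdots+\omega_{p}=p-1<p$, at least one $\omega_{i}=0$, i.e.\ some factor equals $-P=-\ket{u}\bra{v}$; cutting the cyclic trace at each such factor (using $\braket{v|u}=1$) expresses it as a product of scalars $\bra{v}\,W\,\ket{u}$, where $W$ is a maximal word in $T^{(\nu)}$'s and $S^{(\omega\ge1)}$'s between two consecutive $-P$'s. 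By (b) a nonzero such block must start and end with a $T$-factor, so it has $\tau$ $T$-factors and $\tau-1$ interior $S$-factors; evaluating it from $\ket{u}$ inward, the rightmost $T$-factor maps $\ket{u}$ to $(-m)^{\nu}\ket{e_{0}}$ (level $0$) by (a), and to survive the final pairing with $\bra{v}$ one must reach level $k$, which by (c) requires at least $k$ level-raising $T^{(1)}$-factors besides that terminal one; hence $\tau\ge k+1$ and $\sum_{W}\nu_{i}\ge k+1$. Summing over the $z\ge1$ blocks yields $n=\sum_{i}\nu_{i}\ge z(k+1)\ge k+1$, so every term of $\lambda^{(n)}$ is zero for $1\le n\le k$.

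For $n=k+1$, the chain $n\ge z(k+1)$, $p\le n$, and ``$\ge k+1$ $T$-factors per block'' forces $z=1$, $p=k+1$, all $\nu_{i}=1$, and then (the $k$ remaining $\omega_{i}$ being $\ge1$ with sum $p-1=k$) all remaining $\omega_{i}=1$; by cyclicity the $k+1$ such orderings contribute equally, so
\begin{equation*}
\lambda^{(k+1)}=\frac{(-1)^{k+1}}{k+1}\,(k+1)\,\mathrm{Tr}\!\big((T^{(1)}S)^{k}T^{(1)}(-P)\big)=(-1)^{k}\,\bra{v}(T^{(1)}S)^{k}T^{(1)}\ket{u}.
\end{equation*}
By (a) the rightmost $T^{(1)}$ sends $\ket{u}$ to $-m\ket{e_{0}}$, and by (c) the only way the remaining $k$ copies of $S T^{(1)}$ can populate $\ket{e_{k}}$ is along the monotone path $0\to0\to1\to1\to\cdots\to k$, contributing $(S)_{j,j}=-k/(k-j)$ at level $j$ and $(T^{(1)})_{j+1,j}=(j+1)^{2}$ going from level $j$ to $j+1$; hence
\begin{equation*}
\lambda^{(k+1)}=(-1)^{k}(-m)\prod_{j=0}^{k-1}\Bigl(-\frac{k}{k-j}\Bigr)(j+1)^{2}=(-1)^{k}(-m)(-1)^{k}k^{k}\,\frac{(k!)^{2}}{k!}=-m\,k!\,k^{k},
\end{equation*}
which is \eqref{eq: lambda expand}.

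The arithmetic in the last display is routine; the delicate part — and the step I expect to be the main obstacle — is the combinatorial bookkeeping of the two preceding paragraphs: making airtight that the ``only $T^{(1)}$ raises the level by one'' principle really forces $\sum_{W}\nu_{i}\ge k+1$ for \emph{all} admissible index tuples (in particular those with several $-P$-factors, those using higher matrices $T^{(\nu)}$, $\nu\ge2$, and those with non-monotone detours, which only enlarge $\tau$ but must be argued away explicitly), and that at order $k+1$ precisely the $k+1$ cyclic rotations of $(T^{(1)}S)^{k}T^{(1)}(-P)$ survive with multiplicity one each, so that the $1/(k+1)$ prefactor cancels cleanly. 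Establishing fact (b) — especially $S\ket{e_{0}}=-\ket{e_{0}}$, $\bra{v}S=S\ket{u}=0$, and the value of $(S)_{j,j}$ — from \eqref{eq: def_S} and Lemma~\ref{lem: e.val of T} is a short but essential computation underpinning everything.
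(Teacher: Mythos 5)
Your proposal is correct and follows essentially the same route as the paper's proof: Kato's trace formula, the observation that every factor except the tridiagonal $T^{(1)}$ is upper triangular so that a nonzero $(k,0)$-entry requires at least $k+1$ copies of $T^{(1)}$ (killing $\lambda^{(1)},\dots,\lambda^{(k)}$), and the evaluation of the single surviving word $(T^{(1)}S)^{k}T^{(1)}(-P)$ at order $k+1$ via the subdiagonal entries $(j+1)^{2}$ of $T^{(1)}$ and the diagonal entries $-k/(k-j)$ of $S$. Your block decomposition at the $-P$ factors and the argument forcing all $\omega_{i}=1$ make explicit some combinatorial bookkeeping the paper leaves implicit, but the mechanism and the final computation are identical.
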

\begin{proof}
  We will consider the following formula
  % 式(\ref{eq: lambda})のシグマの中身である，
  \begin{align}
    \label{eq: trace}
    \mathrm{Tr}\left(T^{(\nu_{1})}S^{(\omega_{1})}\cdots T^{(\nu_{p})}S^{(\omega_{p})}\right).
  \end{align}
  If there exist $q \in [p]$ such that $\omega_{q}<0$, Eq.\ (\ref{eq: trace}) equals 0 since $S^{(\omega_{q})}=O$.
  Thus, we consider only the case which $\omega_{q} \geq 0$ holds.
  Because of $\omega_{1}+\cdots + \omega_{p}=p-1$, there exist $q \in [p]$ such that $\omega_{q}=0$.
  Without loss of generality, we assume $\omega_{p}=0$.
  % さらに，行列のトレースは可換であることに注意すると，$\omega_{p}=0$としても一般性を失わない.
  Under this assumption, we have
  \begin{align*}
    \mathrm{Tr}\left(T^{(\nu_{1})}S^{(\omega_{1})}\cdots T^{(\nu_{p})}S^{(0)}\right)
    &=\mathrm{Tr}\left(T^{(\nu_{1})}S^{(\omega_{1})}\cdots T^{(\nu_{p})}(-\ket{u}\bra{v})\right) \\
    &=-\bra{v}T^{(\nu_{1})}S^{(\omega_{1})}\cdots T^{(\nu_{p})}\ket{u}.
  \end{align*}
Combining Eqs.\ (\ref{eq: T1}) with (\ref{eq: T23}), we get
\begin{align*}
  T^{(\nu_{p})}\ket{u}=(-m)^{\nu_{p}}\cdot [1,0,...,0]^{\top},
\end{align*}
Hence, it follows
\begin{align*}
  -\bra{v}T^{(\nu_{1})}S^{(\omega_{1})}\cdots T^{(\nu_{p})}\ket{u}
  &=-(-m)^{\nu_{p}}[0,0,...,0,1]T^{(\nu_{1})}S^{(\omega_{1})}\cdots T^{(\nu_{p-1})}S^{(\omega_{p-1})} [1,0,...,0]^{\top}.
  % &=-(-m)^{\nu_{p}}\left(T^{(\nu_{1})}S^{(\omega_{1})}\cdots T^{(\nu_{p-1})}S^{(\omega_{p-1})}\right)_{k,0}.
\end{align*}
This equation implies that Eq.\ (\ref{eq: trace}) is zero if the $(k,0)$-component of $T^{(\nu_{1})}S^{(\omega_{1})}\cdots T^{(\nu_{p-1})}S^{(\omega_{p-1})}$ is zero.
Since $S,P,T^{(0)}$ and $T^{(q)}\, (q=2,3,...)$ are upper triangular matrices and $T^{(1)}$ is tridiagonal matrix, it must be multiplied by $T^{(1)}$ at least $k$ times in order to have the value without 0 in the $(k,0)$-component of $T^{(\nu_{1})}S^{(\omega_{1})}\cdots T^{(\nu_{p-1})}S^{(\omega_{p-1})}$. 
Thus, Eq.\ (\ref{eq: trace}) becomes non-zero for the first time when $p=k+1$ holds.
On the contrary, it follows that $\mathrm{Tr}\left(T^{(\nu_{1})}S^{(\omega_{1})}\cdots T^{(\nu_{p})}S^{(\omega_{p})}\right)=0$ if $p<k+1$ holds.
Hence, from Eq.\ (\ref{eq: lambda}) and above discussion, we have
\begin{align*}
  \lambda^{(1)}=\cdots=\lambda^{(k)}=0.
\end{align*}
While $\lambda^{(k+1)}$ is as follows:
\begin{align}
\label{eq :lambda-k}
  \lambda^{(k+1)}
  &=\sum_{p=1}^{k+1}\frac{(-1)^{p}}{p} \sum_{\substack{\nu_{1}+\cdots + \nu_{p}=k+1 \\ \omega_{1}+\cdots + \omega_{p}=p-1}} \mathrm{Tr}\left(T^{(\nu_{1})}S^{(\omega_{1})}\cdots T^{(\nu_{p})}S^{(\omega_{p})}\right) \notag \\
  &=\frac{(-1)^{k+1}}{k+1} \sum_{\substack{\omega_{1}+\cdots + \omega_{k+1}=k}} \mathrm{Tr}\left(T^{(1)}S^{(\omega_{1})}\cdots S^{(\omega_{k})} T^{(1)}S^{(\omega_{k+1})}\right) \notag \\
  &=\frac{(-1)^{k+1}}{k+1} (k+1)\sum_{\substack{\omega_{1}+\cdots + \omega_{k}=k}} \mathrm{Tr}\left(T^{(1)}S^{(\omega_{1})}\cdots S^{(\omega_{k})} T^{(1)}S^{(0)}\right) \notag \\
  &=(-1)^{k+1} m \cdot  [0,0,...,0,1] T^{(1)}S^{(\omega_{1})}\cdots T^{(1)}S^{(\omega_{k})} [1,0,...,0]^{\top} \notag \\
  &=(-1)^{k+1} m \left((T^{(1)}S)^{k}\right)_{k,0}
\end{align}
where $\left( A \right)_{i,j}$ is the $(i,j)$-component of a matrix $A$.
Since $S$ is upper triangular matrix and $T^{(1)}$ is tridiagonal matrix, the $(k,0)$- component of $(T^{(1)}S)^{k}$ is given by
\begin{align*}
  \left((T^{(1)}S)^{k}\right)_{k,0}=\prod_{j=1}^{k}(T^{(1)})_{j,j-1} \cdot \prod_{j=0}^{k-1} (S)_{j,j}.
\end{align*}
From Eq.\ (\ref{eq: def_S}), the diagonal component of $S$ is given by
\begin{align*}
  (S)_{j,j}=-\frac{1}{1-j/k} \quad (j=0,1,...,k-1).
\end{align*}
Therefore, we see
\begin{align*}
  \left((T^{(1)}S)^{k}\right)_{k,0}
  &=\prod_{j=1}^{k} j^{2} \cdot \prod_{j=0}^{k-1} \left(-\frac{1}{1-j/k}\right) \\
  &=(k!)^{2} \cdot (-1)^{k} \cdot \frac{k^{k}}{k!} \\
  &=(-1)^{k} \cdot k! \cdot k^{k}.
\end{align*}
Combining Eq.\ (\ref{eq :lambda-k}) with above equation,  $\lambda^{(k+1)}$ is calculated as
\begin{align*}
  \lambda^{(k+1)}=-m\cdot k!\cdot k^{k}.
\end{align*}
Hence, we get Eq.\ (\ref{eq: lambda expand}).
\end{proof}

% \begin{lemma}
%   時間発展作用素$U$は固有値$e^{\pm i \theta}$を持ち，その固有ベクトルは次のように与えられる.
%   \begin{align}
%     \label{eq: e vec}
%     \psi_{\pm \theta}(a)=\frac{\sqrt{m(|a_{J}|)}(1-e^{\pm i \theta})}{\sqrt{2}\sin \theta}.
%   \end{align}
%   ただし，$\cos \theta=\lambda(\epsilon)$である.
% \end{lemma}
% \begin{proof}
%   量子ウォークにおけるスペクトル写像定理\cite{HKSS14}より，$g \in \mathrm{Ker(\cos \theta -J)}$に対して
%   \begin{align*}
%     \psi_{\pm \theta}(a)
%     &=\frac{1}{\sqrt{2}|\sin \theta|}\left(d^{*}-e^{\pm i \theta}Sd^{*}\right)g(t(a)) \\
%   \end{align*}
%   \begin{align*}
%     \psi_{\pm \theta}(a)
%     &=\frac{1}{\sqrt{2}\sin \theta}\left(d^{*}-e^{\pm i \theta}Sd^{*}\right)g(t(a)) \\
%     &=\frac{1}{\sqrt{2}\sin \theta}\left(\sqrt{p(a^{-1})}g(t(a))-e^{\pm i \theta}\sqrt{p(a)}g(o(a))\right) \\
%     &=\frac{1}{\sqrt{2}\sin \theta}\left(\sqrt{p(a^{-1})} \sqrt{\pi(t(a))}f(t(a)) -e^{\pm i \theta}\sqrt{p(a)}\sqrt{\pi(o(a))}f(o(a))\right) \\
%     &=\frac{\sqrt{m(|a_{J}|)}}{\sqrt{2}\sin \theta}\left(f(t(a)) -e^{\pm i \theta}f(o(a))\right) \\
%     &=\frac{\sqrt{m(|a_{J}|)}(1-e^{\pm i \theta})}{\sqrt{2}\sin \theta}.
%   \end{align*}
% \end{proof}

% \subsection{Analysis of the probability and optimal time steps}
% \label{subsec_QW}
Let us set $\cos \theta:=\lambda(\epsilon)$.
By Lemma \ref{lem: element} and Eq.\ (\ref{eq: perturbed_vec}), we have the eigenvector associated to $\cos \theta$.
\begin{lemma}
  \label{lem: U_element}
  % The eigenvector of $U$ associated to the eigenvalue $e^{\pm i \theta}$ holds the following equation,
  Let $\ket{\psi_{\pm \theta}} \in \mathrm{Ker}(e^{\pm i \theta}-U)$. Then, it is obtained that
  % $U$の固有値$e^{\pm i \theta}$に対する固有ベクトル$\ket{\psi_{\pm \theta}}$は$\mathcal{W} \in A^{\mathrm{inv}}$に対して以下のようになる.
  % \begin{align*}
  %   %\label{eq: element_A}
  %   \braket{\mathcal{A}_{j}|\psi_{\pm \theta}}&=\frac{1-e^{\pm i\theta}}{\sqrt{2}|\sin \theta|}\sqrt{\frac{a_{j}|X_{j}|}{m+|A_{J}|}}+O(\epsilon), \\
  %   %\label{eq: element_B}
  %   \braket{\mathcal{B}_{j}|\psi_{\pm \theta}}&=\frac{1-e^{\pm i\theta}}{\sqrt{2}|\sin \theta|}\sqrt{\frac{b_{j}|X_{j}|}{m+|A_{J}|}}+O(\epsilon), \\
  %   %\label{eq: element_C}
  %   \braket{\mathcal{C}_{j}|\psi_{\pm \theta}}&=\frac{1-e^{\pm i\theta}}{\sqrt{2}|\sin \theta|}\sqrt{\frac{c_{j}|X_{j}|}{m+|A_{J}|}}+O(\epsilon), \\
  %   %\label{eq: element_S_plus}
  %   \braket{S_{+}|\psi_{\pm \theta}}&=-\frac{e^{\pm i\theta}}{\sqrt{2}|\sin \theta|} \sqrt{\frac{m}{m+|A_{J}|}}+O(\epsilon), \\
  %   %\label{eq: element_S_minus}
  %   \braket{S_{-}|\psi_{\pm \theta}}&=\frac{1}{\sqrt{2}|\sin \theta|} \sqrt{\frac{m}{m+|A_{J}|}}+O(\epsilon).
  % \end{align*}
  \begin{align*}
    \psi_{\pm \theta}(\mathcal{W})=\frac{\sqrt{M(\mathcal{W})}}{\sqrt{2}|\sin \theta|}(1-e^{\pm i \theta})
  \end{align*}
  for $\mathcal{W} \in A^{\mathrm{inv}}$.
\end{lemma}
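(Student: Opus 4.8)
The plan is to specialize Lemma \ref{lem: element} to the unique eigenvalue of $T=T(\epsilon)$ that continues the eigenvalue $1$ of $T^{(0)}$, and then to replace the associated eigenvector of $T$ by its leading-order form supplied by Proposition \ref{prop: perturbation}.

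First I would check that Lemma \ref{lem: element} applies with $\cos\theta:=\lambda(\epsilon)$: by Eq.\ (\ref{eq: lambda expand}), $\lambda(\epsilon)=1-mk!\,k^{k}\epsilon^{k+1}+O(\epsilon^{k+2})$, which for $\epsilon=1/d$ small lies strictly in $(-1,1)$, and the same expansion fixes the scale $\theta\sim\sqrt{2mk!\,k^{k}}\,\epsilon^{(k+1)/2}$ (also needed for Theorem \ref{thm: time}). Since $\lambda(\epsilon)$ is a simple eigenvalue for small $\epsilon$, the space $\mathrm{Ker}(\cos\theta-T)$ is one-dimensional, so I may take $\ket{f}$ to be the eigenvector normalized to agree with the perturbed eigenvector $\ket{u(\epsilon)}$ of Proposition \ref{prop: perturbation}; then by Eq.\ (\ref{eq: perturbed_vec}), $f(j)=1+O(\epsilon)$ for every coordinate $j$ (for the arcs $S_{\pm}$ meeting $V_{S}$ one extends by the eigenvalue relation $f(-1)=f(0)/\lambda(\epsilon)=1+O(\epsilon)$). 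Substituting $f(t(\mathcal{W}))=1+o(1)$ and $f(o(\mathcal{W}))=1+o(1)$ into the formula of Lemma \ref{lem: element} collapses the eigenvector factor and leaves
\[
  \psi_{\pm\theta}(\mathcal{W})=\frac{\sqrt{M(\mathcal{W})}}{\sqrt 2\,|\sin\theta|}\bigl(1-e^{\pm i\theta}\bigr),
\]
which is the assertion.

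The point that needs care — and the step I expect to be the main obstacle — is that the correction being discarded has size $||u(\epsilon)-u||=O(\epsilon)$, while the surviving factor $1-e^{\pm i\theta}$ is only of order $\theta=O(\epsilon^{(k+1)/2})$, which is smaller than $\epsilon$ once $k\ge 2$; so the bare bound $\ket{u(\epsilon)}=\ket{u}+O(\epsilon)$ does not by itself justify the identity coordinate by coordinate. To close this gap I would go one layer deeper into the perturbation series: from $T^{(1)}\ket{u}=-m[1,0,\dots,0]^{\top}$ (already used in the previous lemma) and the explicit matrix $S$ of Eq.\ (\ref{eq: def_S}), one checks that the first-order correction to $\ket{u(\epsilon)}$ equals $-m[1,0,\dots,0]^{\top}$, i.e.\ it is supported on the coordinate of $v^{*}$ alone, and more generally the low-order corrections stay concentrated near $v^{*}$, so that $f(j)=1+O(\epsilon^{2})$ away from $v^{*}$ while only $f(0)$ and $f(-1)$ carry the $O(\epsilon)$ deviation. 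Hence the displayed formula holds to leading order for the overwhelming majority of arcs, and for the finitely many arc classes incident to $v^{*}$ (together a vanishing fraction of $A$) the deviation is harmless in the only quantities for which Lemma \ref{lem: U_element} is subsequently used — the overlaps of the $U$-eigenvectors with the uniform state $\ket{\psi_{0}}$ and the finding probability $p_{s}(t)$. With this reading the proof is the substitution above.
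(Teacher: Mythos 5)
Your proposal follows the same route as the paper: set $\cos\theta=\lambda(\epsilon)$, apply Lemma \ref{lem: element} with this eigenvalue, and replace the eigenvector $\ket{f}$ by the unperturbed eigenvector $\ket{u}=[1,1,\dots,1]^{\top}$ via Eq.\ (\ref{eq: perturbed_vec}), which is exactly the one-line justification the paper gives for Lemma \ref{lem: U_element}. Your additional observation --- that the discarded correction is $O(\epsilon)$ while the retained factor $1-e^{\pm i\theta}$ is only $O(\epsilon^{(k+1)/2})$, so the crude bound $\ket{u(\epsilon)}=\ket{u}+O(\epsilon)$ does not suffice coordinatewise, and that this is repaired because the first-order correction $\propto S\,[1,0,\dots,0]^{\top}$ is supported on the coordinate of $v^{*}$, where the weight $\sqrt{M(\mathcal{W})}$ is negligible for the subsequent overlap computations --- is a genuine strengthening of an argument the paper leaves implicit, not a deviation from it.
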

Combining Lemma \ref{lem: U_element} with Eq.\ (\ref{eq: initial state}), we get the following Lemma.
% このとき，初期状態と固有ベクトルとのオーバーラップを具体的に計算すると，次の補題を得ることができる.
\begin{lemma}
\label{lem: overlap}
  % 初期状態$\ket{\psi_{0}}$と，固有値$e^{\pm i \theta}$の固有ベクトル$\ket{\psi_{\pm \theta}}$との内積は以下のようになる.
  For the initial state $\ket{\psi_{0}}$ and the eigenvector $\ket{\psi_{\pm \theta}}$ of $U$, it holds the following approximated formula:
  \begin{align*}
    \braket{\psi_{0}|\psi_{\pm \theta}}\approx\frac{(1-e^{\pm i \theta})}{\sqrt{2}\sin \theta}\sqrt{\frac{|A_{J}|}{m+|A_{J}|}}.
  \end{align*}
\end{lemma}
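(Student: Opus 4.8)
The plan is to exploit that the initial state $\ket{\psi_{0}}$ already lies in the $U$-invariant subspace $\HAI$, so that the overlap collapses to a finite sum over the orthonormal family $\{\ket{\mathcal{A}_{j}},\ket{\mathcal{B}_{j}},\ket{\mathcal{C}_{j}},\ket{S_{\pm}}\}$, each of whose entries is furnished by Lemma~\ref{lem: U_element}. First I would check that the arcs of $A_{J}$ are partitioned, disjointly, by the families $\mathcal{A}_{j}$ $(1\le j\le k)$, $\mathcal{B}_{j}$ $(0\le j\le k-1)$ and $\mathcal{C}_{j}$ $(1\le j\le k)$: every arc of $J(n,k)\wedge_{v^{*}}S_{m}$ with both endpoints in $V_{J}\cup\{v^{*}\}$ joins two vertices whose distances to $v^{*}$ differ by at most one, and none of these arcs lies in $S_{\pm}$. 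Hence $\sum_{a\in A_{J}}\ket{a}=\sum_{\mathcal{W}}\sqrt{|\mathcal{W}|}\,\ket{\mathcal{W}}$, the sum running over those families, so $\ket{\psi_{0}}=|A_{J}|^{-1/2}\sum_{\mathcal{W}}\sqrt{|\mathcal{W}|}\,\ket{\mathcal{W}}\in\HAI$, and since $\ket{\psi_{\pm\theta}}\in\HAI$ and the $\ket{\mathcal{W}}$ are orthonormal,
\[
  \braket{\psi_{0}|\psi_{\pm\theta}}=\frac{1}{\sqrt{|A_{J}|}}\sum_{\mathcal{W}}\sqrt{|\mathcal{W}|}\,\psi_{\pm\theta}(\mathcal{W}).
\]

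Next I would insert the value of $\psi_{\pm\theta}(\mathcal{W})$ from Lemma~\ref{lem: U_element}; the common factor $(1-e^{\pm i\theta})/(\sqrt{2}\,|\sin\theta|)$ pulls out of the sum, leaving
\[
  \braket{\psi_{0}|\psi_{\pm\theta}}=\frac{1-e^{\pm i\theta}}{\sqrt{2}\,|\sin\theta|\,\sqrt{|A_{J}|}}\,\Sigma,\qquad \Sigma:=\sum_{\mathcal{W}}\sqrt{|\mathcal{W}|\,M(\mathcal{W})}.
\]
Everything thus reduces to showing $\Sigma=(1+o(1))\,|A_{J}|/\sqrt{m+|A_{J}|}$. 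For this I would use $|\mathcal{A}_{j}|=a_{j}|X_{j}|$, $|\mathcal{B}_{j}|=b_{j}|X_{j}|$, $|\mathcal{C}_{j}|=c_{j}|X_{j}|$, $|\mathcal{B}_{0}|=d$, together with $M(\mathcal{W})=\pi_{o(\mathcal{W})}p(\mathcal{W})$, the transition weights $p(\mathcal{A}_{j})=a_{j}/d$, $p(\mathcal{B}_{j})=b_{j}/d$ $(j\ge1)$, $p(\mathcal{C}_{j})=c_{j}/d$, $p(\mathcal{B}_{0})=x$, and the two identities $a_{j}+b_{j}+c_{j}=d$ and $\pi_{j}=d|X_{j}|/(m+|A_{J}|)$ for $j\ge0$. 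Then at each level $j\ge1$ one has $\sqrt{|\mathcal{W}|M(\mathcal{W})}=(\text{the appropriate one of }a_{j},b_{j},c_{j})\cdot\sqrt{|X_{j}|\pi_{j}/d}$, so the three contributions at level $j$ add up, by $a_{j}+b_{j}+c_{j}=d$, to $\sqrt{d}\,\sqrt{|X_{j}|\pi_{j}}=d|X_{j}|/\sqrt{m+|A_{J}|}$; summing over $j=1,\dots,k$ and using $\sum_{j=0}^{k}|X_{j}|=N$ and $|X_{0}|=1$ gives $d(N-1)/\sqrt{m+|A_{J}|}$, while the single boundary term $\sqrt{|\mathcal{B}_{0}|M(\mathcal{B}_{0})}=\sqrt{x}\cdot d/\sqrt{m+|A_{J}|}$ adds one more copy of $d/\sqrt{m+|A_{J}|}$ up to the factor $\sqrt{x}=\sqrt{d/(d+m)}=1+o(1)$. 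Since $|A_{J}|=dN$, this yields $\Sigma=(1+o(1))\,dN/\sqrt{m+|A_{J}|}=(1+o(1))\,|A_{J}|/\sqrt{m+|A_{J}|}$, and substituting back gives the asserted formula.

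The computation is essentially bookkeeping; the only point that really needs care is the boundary level $j=0$, where $\mathcal{A}_{0}$ and $\mathcal{C}_{0}$ are absent and $\mathcal{B}_{0}$ carries the ``defective'' weight $x=d/(d+m)$ in place of $b_{0}/d=1$. As this is a single term of size $O(d/\sqrt{m+|A_{J}|})$ against a sum $\Sigma$ of order $N$ such terms, both its presence and the deviation of $\sqrt{x}$ from $1$ are absorbed into the $o(1)$. One should also note that the ``$\approx$'' in the statement already accommodates the $O(\epsilon)$ error hidden in Lemma~\ref{lem: U_element}, which stems from replacing the perturbed eigenvector by $\ket{u}$ in Eq.~(\ref{eq: perturbed_vec}); since $\epsilon=1/d\to0$ and $N\to\infty$ for fixed $k$ and $n\gg k$, all of these errors are $o(1)$, so the approximation is legitimate.
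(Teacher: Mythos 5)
Your proposal is correct and follows essentially the same route as the paper: expand $\braket{\psi_{0}|\psi_{\pm\theta}}$ over the orthonormal family $\{\ket{\mathcal{A}_{j}},\ket{\mathcal{B}_{j}},\ket{\mathcal{C}_{j}},\ket{S_{\pm}}\}$, insert the entries from Lemma~\ref{lem: U_element}, and collapse the sum via $M(\mathcal{W})=\pi_{o(\mathcal{W})}p(\mathcal{W})$ and $a_{j}+b_{j}+c_{j}=d$ to get $d\sum_{j}|X_{j}|=|A_{J}|$. Your extra care with the boundary term $\mathcal{B}_{0}$ (the factor $\sqrt{x}=1+o(1)$) is a minor refinement the paper silently absorbs into the approximation, not a different argument.
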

\begin{proof}
  By Eq.\ (\ref{eq: initial state}), we have the following equation.
  \begin{equation*}
    \psi_{0}(\mathcal{W})=
    \begin{dcases*}
      \sqrt{\frac{a_{j}|X_{j}|}{|A_{J}|}} &: $\mathcal{W}=\mathcal{A}_{j} \quad (j=0,..., k), $ \\
      \sqrt{\frac{b_{j}|X_{j}|}{|A_{J}|}} &: $\mathcal{W}=\mathcal{B}_{j} \quad (j=0,..., k),$ \\
      \sqrt{\frac{c_{j}|X_{j}|}{|A_{J}|}} &: $\mathcal{W}=\mathcal{C}_{j} \quad (j=0,..., k),$ \\
      0 &: $\mathcal{W}=S_{\pm}$.
    \end{dcases*}
  \end{equation*}
  From the initial state and Lemma \ref{lem: U_element}, we see
  \begin{align*}
    \braket{\psi_{0}| \psi_{\pm \theta}}
    &=\sum_{\mathcal{W} \in A^{\mathrm{inv}}}\psi_{0}(\mathcal{W})\psi_{\pm \theta}(\mathcal{W}) \\
    &=\frac{(1-e^{\pm i \theta})}{\sqrt{2}\sin \theta}\left\{\sum_{j=0} ^{k} \sqrt{M(\mathcal{A}_{j})\cdot \frac{a_{j}|X_{j}|}{|A_{J}|}}\right. \\
    & \qquad + \left.\sum_{j=0}^{k}\sqrt{M(\mathcal{B}_{j})\cdot\frac{b_{j}|X_{j}|}{|A_{J}|}} + \sum_{j=0}^{k}\sqrt{M(\mathcal{C}_{j})\cdot\frac{c_{j}|X_{j}|}{|A_{J}|}}  \right\} \\
    &=\frac{(1-e^{\pm i \theta})}{\sqrt{2}\sin \theta}\left\{\sum_{j=0} ^{k} \sqrt{\frac{a_{j}|X_{j}|}{m+|A_{J}|}\cdot \frac{a_{j}|X_{j}|}{|A_{J}|}}\right. \\
    & \qquad + \left.\sum_{j=0}^{k}\sqrt{\frac{b_{j}|X_{j}|}{m+|A_{J}|}\cdot\frac{b_{j}|X_{j}|}{|A_{J}|}} + \sum_{j=0}^{k}\sqrt{\frac{c_{j}|X_{j}|}{m+|A_{J}|}\cdot\frac{c_{j}|X_{j}|}{|A_{J}|}}  \right\} \\
    &=\frac{(1-e^{\pm i \theta})}{\sqrt{2}\sin \theta}\frac{1}{\sqrt{|A_{J}|(m+|A_{J}|)}} \left\{ \sum_{j=0}^{k} a_{j}|X_{j}| + \sum_{j=0}^{k} b_{j} |X_{j}| + \sum_{j=0}^{k} c_{j}|X_{j}|\right\} \\
    &=\frac{(1-e^{\pm i \theta})}{\sqrt{2}\sin \theta}\frac{1}{\sqrt{|A_{J}|(m+|A_{J}|)}} \left\{d\sum_{j=0}^{k} |X_{j}|  \right\} \\
    &=\frac{(1-e^{\pm i \theta})}{\sqrt{2}\sin \theta}\sqrt{\frac{|A_{J}|}{m+|A_{J}|}}.
  \end{align*}
\end{proof}
Here, by Lemma \ref{lem: overlap}, we get
\begin{align*}
  |\braket{\psi_{0}| \psi_{\pm \theta}}|
  &\approx \left|\frac{e^{\pm i\theta/2}(\mp 2\sin\theta/2)}{\sqrt{2}\cdot 2\sin \theta/2 \cos \theta/2} \right| \\
  &\approx \frac{1}{\sqrt{2}}.
\end{align*}
Therefore, we can estimate the state after $t$ steps as follows:
% \begin{align}
%   \label{eq: approx state}
%   \braket{a|\psi_{t}}&=\bra{a}U^{t}\ket{\psi_{0}} \notag \\
%   &= \sum_{\mu \in \mathrm{Spec}(U)} \mu^{t} \braket{\psi_{0}| \varphi_{\mu}} \braket{a|\varphi_{\mu}} \notag \\
%   &\approx e^{it \theta} \braket{\psi_{0}| \psi_{\theta}} \braket{a|\psi_{\theta}}+e^{-it \theta} \braket{\psi_{0}| \psi_{-\theta}} \braket{a|\psi_{-\theta}}
% \end{align}
\begin{align}
  \label{eq: approx state}
  \psi_{t}(\mathcal{W})&=U^{t}\psi_{0}(\mathcal{W}) \notag \\
  &= \sum_{\mu \in \mathrm{Spec}(U)} \mu^{t} \braket{\psi_{0}| \psi_{\mu}} \psi_{\mu}(\mathcal{W}) \notag \\
  &\approx e^{it \theta} \braket{\psi_{0}| \psi_{\theta}} \psi_{\theta}(\mathcal{W})+e^{-it \theta} \braket{\psi_{0}| \psi_{-\theta}} \psi_{-\theta}(\mathcal{W})
\end{align}
for $\mathcal{W} \in A^{\mathrm{inv}}$ and $\psi_{\mu} \in \mathrm{Ker}(\mu-U)$.
By the definition, the probability of finding the star graph after $t$ steps is 
\begin{align*}
  % p_{s}(t)&=\sum _{a \in A_{S}} |\braket{a|\psi_{t}}|^{2} \\
  % &=|\psi_{t}(S_{+})|^{2} + |\psi_{t}(S_{-})|^{2}.
    p_{s}(t)=|\psi_{t}(S_{+})|^{2} + |\psi_{t}(S_{-})|^{2}.
\end{align*}
This equation implies that we can get the probability if we estimate $\psi_{t}(S_{\pm})$.
% The estimation of $\psi_{t}(S_{\pm})$ is the following Lemma.
% \begin{lemma}
%   For sufficiently large $N$, $\psi_{t}(S_{-})$ is estimated as follows:
%   \begin{align}
%     \label{eq: state star}
%     \psi_{t}(S_{-})\approx \frac{1}{\theta}\frac{\sqrt{m|A_{J}|}}{m+|A_{J}|}\sin (t\theta).
%   \end{align}
% \end{lemma}
% \begin{proof}
%   From Eq.\ (\ref{eq: approx state}), we get
%   \begin{align*}
%     \psi_{t}(S_{-})
%     &\approx e^{it \theta} \braket{\psi_{0}|\psi_{\theta}} \braket{S_{-}|\psi_{\theta}}+e^{-it \theta} \braket{\psi_{0}|\psi_{-\theta}} \braket{S_{-}|\psi_{-\theta}} \\
%     &=\frac{1}{\sin^{2} \theta}\frac{\sqrt{m|A_{J}|}}{m+|A_{J}|}\left(\cos(t \theta)- \cos((t+1)\theta)\right) \\
%     &=\frac{1}{4\sin^{2}\theta/2 \cos \theta/2} \frac{\sqrt{m|A_{J}|}}{m+|A_{J}|} \cdot 2\sin(\theta/2) \sin(t+1/2)\theta \\
%     &\approx \frac{1}{\theta}\frac{\sqrt{m|A_{J}|}}{m+|A_{J}|}\sin (t\theta).
%   \end{align*}
%   Here, we applied $\sin \theta \approx \theta$ for sufficiently small $\theta$.
% \end{proof}
From Lemma \ref{lem: overlap} and Eq.\ (\ref{eq: approx state}), we get
\begin{align}
\label{eq: state star}
    \psi_{t}(S_{-})
    &\approx e^{it \theta} \braket{\psi_{0}|\psi_{\theta}} \braket{S_{-}|\psi_{\theta}}+e^{-it \theta} \braket{\psi_{0}|\psi_{-\theta}} \braket{S_{-}|\psi_{-\theta}} \notag \\
    &=\frac{1}{\sin^{2} \theta}\frac{\sqrt{m|A_{J}|}}{m+|A_{J}|}\left(\cos(t \theta)- \cos((t+1)\theta)\right) \notag \\
    &=\frac{1}{4\sin^{2}\theta/2 \cos \theta/2} \frac{\sqrt{m|A_{J}|}}{m+|A_{J}|} \cdot 2\sin(\theta/2) \sin(t+1/2)\theta \notag \\
    &\approx \frac{1}{\theta}\frac{\sqrt{m|A_{J}|}}{m+|A_{J}|}\sin (t\theta).
\end{align}
Here, from Eq.\ (\ref{eq: lambda expand}), we have
\begin{align}
  \label{eq: approx theta}
  \theta \approx \sin \theta \approx \sqrt{2m \cdot k^{k}\cdot k!\cdot  \epsilon^{k+1}}+O(\epsilon^{k+1}).
\end{align}
On the other hand, since $N=\tbinom{n}{k}\approx(k!)^{-1}n^{k}$ holds for large $n$,
we get
\begin{align}
  \label{eq: N approx}
  N
  &\approx \frac{1}{k!}\left(\frac{1}{k \epsilon} \right)^{k}.
\end{align}
Recall that $d=k(n-k)=\epsilon^{-1}$ holds.
Combining Eq.\ (\ref{eq: state star}), (\ref{eq: approx theta}) and Eq.\ (\ref{eq: N approx}), we have
\begin{align*}
  \psi_{t}(S_{-})
  &\approx \frac{1}{\theta}\frac{\sqrt{m|A_{J}|}}{m+|A_{J}|}\sin (t\theta) \notag \\
  &\approx \frac{1}{\sqrt{2m \cdot k^{k}\cdot k!\cdot  \epsilon^{k+1}}} \frac{\sqrt{m\frac{1}{\epsilon}\frac{1}{k!}\left(\frac{1}{k \epsilon} \right)^{k} }}{m+\frac{1}{\epsilon}\frac{1}{k!}\left(\frac{1}{k \epsilon} \right)^{k}} \sin (t \theta) \notag \\
  &\approx \frac{1}{\sqrt{2m \cdot k^{k}\cdot k!\cdot  \epsilon^{k+1}}} \sqrt{m\cdot k^{k} \cdot k! \epsilon^{k+1}} \sin (t\theta) \notag \\
  &= \frac{1}{\sqrt{2}} \sin (t \theta).
\end{align*}
We should remark that 
\begin{align*}
  % \label{eq: star_approx}
  \psi_{t+1}(S_{-})=-\psi_{t}(S_{+}),
\end{align*}
holds by the definition of the time evolution matrix.
Thus, we get the probability of finding the star graph after $t$ steps
\begin{align*}
  p_{s}(t)
  &=|\psi_{t}(S_{+})|^{2} + |\psi_{t}(S_{-})|^{2} \\
  &\approx 2 |\psi_{t}(S_{-})|^{2} \\
  &\approx \sin^{2}(t\theta).
\end{align*}
Therefore, the proof of Theorem \ref{thm: prob} is finished.
By Eq.\ (\ref{eq: N approx}), it holds that
\begin{align*}
  \epsilon \approx k^{-1}\cdot \left(k! \cdot N\right)^{-1/k}.
\end{align*}
Thus, the optimal time steps $\tau$ is described as
\begin{align*}
  \tau&=\left \lfloor \frac{\pi}{2\theta} \right \rfloor \\
&\approx \frac{\pi}{2 \sqrt{2m \cdot k^{k}\cdot k!\cdot  \epsilon^{k+1}}} \\
&\approx \frac{\pi}{2 \sqrt{2m \cdot k^{k} \cdot k! \cdot (k^{-1} (k!\cdot N)^{-1/k})^{k+1}}} \\
&= \frac{\pi\sqrt{k\cdot (k!)^{1/k}}}{2\sqrt{2m}}\sqrt{N^{1+1/k}}, \\
\end{align*}
Hence, the proof of Theorem \ref{thm: time} is finished.

\qed

\section{Summary and Discussion}
\label{sec: summary}
In this work, we proposed the pulsation of discrete-time quantum walks on graphs, which is a generalization of the quantum search algorithms.
On the composite graph formed by $J(n,k)$ and $S_{m}$, we proved that the pulsation exists with the asymptotic probability $1+o(1)$ (Theorem \ref{thm: prob}).
In addition, we find that the order of the optimal time with respect to the number of vertices $N$ is $O(\sqrt{N^{1+1/k}})$ (Theorem \ref{thm: time}).
% Kato's perturbation theory contributed to proof the main result. 

The pulsation is observed on several graphs by numerical simulation.
Figure 2 shows examples of the pulsation in the case of $Q_{n} \wedge S_{m}$ and $K_{n_{1}} \wedge K_{n_{2}}$,
where $Q_{n}$ is the $n$-dimensional hypercube.
We consider that such a phenomenon may give a solution to reveal the mathematical structure of the success of the search. 
Thus, to clarify the behavior of the pulsation on the general graph is one of the future problems. 
The details of the pulsation on the composite graph formed by more than three graphs is another interesting future's works.

\begin{figure}[htbp]
  \begin{minipage}[b]{0.5\linewidth}
    \centering
    \includegraphics[keepaspectratio, scale=0.55]{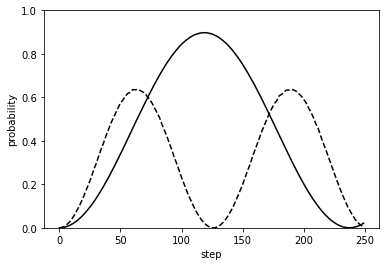}
    \subcaption{}
  \end{minipage}
  \begin{minipage}[b]{0.5\linewidth}
    \centering
    \includegraphics[keepaspectratio, scale=0.55]{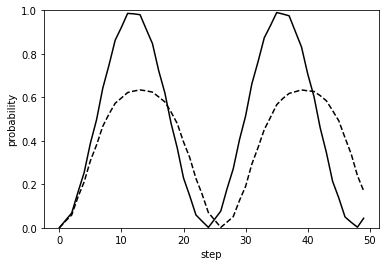}
    \subcaption{}
  \end{minipage} 
  \caption{
  These are examples on the composite graphs formed by (a) $Q_{10}$ and $S_{m}$, (b) $K_{30}$ and $K_{n}$, respectively.
  (a): The solid and dot curves is the probability of the finding $S_{m}$ corresponding to the case of $m=1$ and $m=5$, respectively.
  (b): The solid and dot curves is the probability of the finding $K_{n}$ corresponding to the case of $n=30$ and $n=60$, respectively.
    % The probability of the finding star graph on the Johnson graph. The solid and dot curves correspond to the case of $m=1$ and $m=5$, respectively.
    % (a) and (b) are simulated on $J(15,2)$ and (b) $J(15,3)$, respectively.
  }
\end{figure}
% One of the interesting problem is to clarify the detail of the pulsation on other graphs.
% Another problem is to find the existence of the pulsation on the composite graph formed by more than three graphs. 
% 本研究は量子ウォークにおける拍動現象を導入し，実際にジョンソングラフ$J(n,k)$のある頂点とスターグラフ$S_{m}$の中心の頂点を同一視したグラフ上において考察した.
% その結果，初期状態をジョンソングラフ上で一様にした設定においては，状態がジョンソングラフとスターグラフに周期的に確率$1+o(1)$で存在する時刻があることを示すことができた.
% さらに，そのときの時刻はジョンソングラフの頂点数$N$に対して$O(\sqrt{N^{1+1/k}})$のオーダーであり，パラメータ$k$に依存して変化することを得た.
% したがって，上記のグラフにおいて拍動現象が発生することがわかった.
% また，本設定における証明においては，加藤の摂動論の主張を活用して本研究の重要な補題を得られた.
% この主張は量子ウォークの研究と非常に相性が良く，これからの研究の発展に大いに寄与する可能性を見出すことができた.

% 今後の展望としては，スターグラフをある1つの頂点だけではなく何個か接続した状況において同様の現象が発生するかどうかが挙げられる.
% また，スターグラフに関わらず別のあるグラフを接続して場合において量子ウォークがどのような挙動を示すかもこれからの課題である.

\section*{Acknowledgments}
 E.S. acknowledges financial supports from the Grant-in-Aid of Scientific Research (C) Japan Society for the Promotion of Science (Grant No. 24K06863).
The authors have no competing interests to declare that are relevant to the content of this article. All data generated or analyzed during this study are included in this published article.

\end{document}